\documentclass[journal,draftcls,onecolumn,12pt,twoside]{IEEEtran}

\usepackage{times}
\usepackage{amsmath}
\usepackage{amssymb}
\usepackage{amsthm}
\usepackage{color}
\usepackage{algorithm}
\usepackage[noend]{algorithmic}
\usepackage{graphicx}
\usepackage{subfig}
\usepackage{multirow}
\usepackage[bookmarks=false,colorlinks=false,pdfborder={0 0 0}]{hyperref}
\usepackage{cite}
\usepackage{bm}
\usepackage{arydshln}
\usepackage{mathtools}
\usepackage{microtype}

\newtheorem{theorem}{Theorem}

\newtheorem{lemma}[theorem]{Lemma}

\long\def\symbolfootnote[#1]#2{\begingroup
\def\thefootnote{\fnsymbol{footnote}}\footnote[#1]{#2}\endgroup}

\title{Multi-Layer Transformed MDS Codes with Optimal Repair Access and Low Sub-Packetization}

\author{Hanxu Hou, Patrick P. C. Lee, and Yunghsiang S. Han
}

\begin{document}

\maketitle
\vspace{-0.5cm}
\begin{abstract}\symbolfootnote[0]{Hanxu Hou and Yunghsiang S. Han are with
the School of Electrical Engineering \& Intelligentization, Dongguan
University of Technology (E-mail: houhanxu@163.com, yunghsiangh@gmail.com).
Patrick P. C. Lee is with the Department of Computer Science and Engineering,
The Chinese University of Hong Kong (E-mail: pclee@cse.cuhk.edu.hk).  This
work was partially supported by the National Natural Science Foundation of
China (No.  61701115, 61671007), Start Fund of Dongguan University of
Technology (No. GB200902-19, KCYXM2017025), and Research Grants Council of
Hong Kong (GRF 14216316 and CRF C7036-15G).
}
An $(n,k)$ maximum distance separable (MDS) code has optimal repair access if the minimum number of symbols accessed from $d$ surviving nodes is achieved, where $k+1\le d\le n-1$. Existing results show that the sub-packetization $\alpha$ of an $(n,k,d)$ high code rate (i.e., $k/n>0.5$) MDS code with optimal repair access is at least $(d-k+1)^{\lceil\frac{n}{d-k+1}\rceil}$. In this paper, we propose a class of multi-layer transformed MDS codes such that the sub-packetization is $(d-k+1)^{\lceil\frac{n}{(d-k+1)\eta}\rceil}$, where $\eta=\lfloor\frac{n-k-1}{d-k}\rfloor$, and the repair access is optimal for any single node. We show that the
sub-packetization of the proposed multi-layer transformed MDS codes is
strictly less than the existing known lower bound when $\eta=\lfloor\frac{n-k-1}{d-k}\rfloor>1$, achieving by restricting the choice of $d$
specific helper nodes in repairing a failed node.  We further propose
multi-layer transformed EVENODD codes that have optimal repair access
for any single node and lower sub-packetization than the existing binary
MDS array codes with optimal repair access for any single node. With our
multi-layer transformation, we can design
new MDS codes that have the properties of low computational
complexity, optimal repair access for any single node, and relatively small
sub-packetization, all of which are critical for maintaining the reliability
of distributed storage systems.
\end{abstract}

\begin{IEEEkeywords}
Minimum distance separable codes, optimal repair access, sub-packetization.
\end{IEEEkeywords}

\IEEEpeerreviewmaketitle

\section{Introduction}

Erasure codes are now widely adopted in distributed storage systems (e.g.,
\cite{ford2010}) that can provide significantly higher fault tolerance and
lower storage redundancy than traditional replication.  {\em Maximum distance
separable (MDS)} codes are a class of the erasure codes that can offer
reliability with the minimum amount of storage redundancy.  Specifically, an
$(n,k)$ MDS code, where $n$ and $k<n$ are two configurable parameters, encodes
a data file of $k$ {\em symbols} (i.e., the units for erasure code operations)
is encoded into $n$ symbols that are distributed in $n$ nodes, such that the
data file can be retrieved from any $k$ out of $n$ nodes; in the meantime, it
achieves the minimum redundancy $n/k$.

Upon the failure of any storage node, we want to repair the lost data in the
failed node by downloading the minimum possible amounts of symbols from other
surviving nodes.  \emph{Regenerating codes} (RGCs) \cite{dimakis2010} are a
special class of erasure codes that provably minimize the
{\em repair bandwidth}, defined as the total number of symbols being
downloaded for repairing a failed node.  In particular, minimum storage
regenerating (MSR) codes are a sub-class of RGCs that correspond to the
minimum storage point of RGCs.  MSR codes can also be viewed as a special
class of MDS codes that achieve the minimum repair bandwidth for repairing any
single failed node.

\subsection{Background of Optimal Repair Access}

We present the background of achieving the minimum repair bandwidth as
follows.  Specifically, we construct an $(n,k)$ MDS code through
{\em sub-packetization}, in which we encode a data file of $k\alpha$ symbols
(where $\alpha\ge 1$) over the finite field $\mathbb{F}_q$ into $n\alpha$
symbols that are distributed across $n$ nodes, each of which stores $\alpha$
symbols, such that the data file can be retrieved from any $k$ out of $n$
nodes.  Here, the number of symbols stored in each node, $\alpha$, is called
the {\em sub-packetization level}.

If any storage node fails, a repair operation is triggered to repair the
$\alpha$ lost symbols in a new storage node by downloading the available
symbols from $d$ surviving nodes (called {\em helper nodes}), where $k\le d\le
n-1$.  Dimakis {\em et al.} \cite{dimakis2010} show that the number of symbols
downloaded from each of the $d$ helper nodes, denoted by $\beta$, for
repairing the $\alpha$ lost symbols of a failed node in any $(n,k)$ MDS code
is lower-bounded by
\[
\beta=\frac{\alpha}{d-k+1}.
\]
Thus, the minimum repair bandwidth of any $(n,k)$ MDS code, denoted by
$\gamma$, is:
\begin{equation}
\gamma=d\beta=\frac{d\alpha}{d-k+1}.
\label{eq:optimal-repair}
\end{equation}

MSR codes \cite{dimakis2010} are a special class of MDS codes that achieve the
minimum repair bandwidth in Equation~\eqref{eq:optimal-repair}.  Furthermore,
we define {\em repair access} as the total number of symbols accessed (i.e.,
I/Os) from the $d$ helper nodes during the repair of a failed node.  We can
readily show that the minimum repair access of an MSR code is no less than the
minimum repair bandwidth.  We say that an MDS code has the
{\em optimal repair access} (a.k.a. {\em repair-by-transfer} \cite{shah2012b}) if the
minimum repair access is equal to the minimum repair bandwidth in
Equation~\eqref{eq:optimal-repair} is achieved; in other words, the minimum
amount of I/Os for repair is equal to the minimum repair bandwidth.

\subsection{Related Work}

Many constructions of MSR codes
\cite{rashmi2011,hou2016,suh2011,tamo2013,li2017,goparaju2017,ye2017,ye2017explicit1}
have been proposed in the literature.  For example, product-matrix MSR codes
\cite{rashmi2011} support the parameters that satisfy $2k-2\leq d\leq n-1$,
and are subsequently extended with lower computational complexity
\cite{hou2016}.  Another construction of MSR codes is based on interference
alignment \cite{suh2011}.  However, the above two constructions of MSR codes
are only suitable for low code rates (i.e, $k/n\leq 0.5$).

MSR codes with high code rates (i.e, $k/n> 0.5$) are important in practice.
Some existing constructions of high-code-rate MSR codes are found in
\cite{tamo2013,li2017,goparaju2017,ye2017,ye2017explicit1}.  It is shown in
\cite{balaji2017} that a tight lower bound of the sub-packetization level of
high-code-rate MSR codes with optimal repair access is $(d-k+1)^{\lceil
\frac{n}{d-k+1}\rceil}$.  More generally, for any $(n,k)$ MDS code with
optimal repair access for each of $w$ nodes (where $w<n$), the minimum
sub-packetization level is $\alpha=(d-k+1)^{\lceil \frac{w}{d-k+1}\rceil}$.

There are other practical concerns in distributed storage systems, such as how
to mitigate the computational complexity.  Binary MDS array codes are a
special class of MDS codes that have low computational complexity, since the
encoding and decoding procedures only involve XOR operations. Typical examples
of binary MDS array codes are EVENODD \cite{blaum1995,blaum2001}, X-code
\cite{xu1999} and RDP \cite{corbett2004,blaum2006}. Some efficient decoding
methods of binary MDS array codes are given in
\cite{hou2018,huang2016,hou2018new,hou2018a,hou2018d}.  There have been also
many studies \cite{hou2017,hou2018c,hou2018b,hou2018,li2019} on the optimal
repair bandwidth of binary MDS array codes.

\subsection{Our Contributions}

In this paper, we propose a class of multi-layer transformed MDS codes with
any parameters $k+1\le d\le n-1$, such that the optimal repair access is
achievable for any single node while the sub-packetization level is less than
the lower bound $(d-k+1)^{\lceil \frac{n}{d-k+1}\rceil}$ when $\lfloor
\frac{n-k-1}{d-k}\rfloor >1$.  As a case study, we propose a class of
multi-layer transformed EVENODD codes with optimal repair access for any
single failed column, low sub-packetization level, and low computational
complexity.

The main contributions of this paper are as follows.
\begin{enumerate}
\item
We present a generic transformation for any $(n,k)$ MDS code to obtain another
$(n,k)$ MDS code that has optimal repair access for each of the chosen
$(d-k+1)\eta$ nodes, while the sub-packetization level of the transformed code
is $d-k+1$, where $\eta =\lfloor \frac{n-k-1}{d-k}\rfloor \geq 1$.
\item
By applying the proposed transformation for an $(n,k)$ MDS code
$\lceil\frac{n}{(d-k+1)\eta}\rceil$ times, we can obtain an $(n,k)$
multi-layer transformed MDS code that achieves optimal repair access for
repairing any single failed node, while the sub-packetization level is
$(d-k+1)^{\lceil \frac{n}{(d-k+1)\eta}\rceil}$. When
$\lfloor\frac{n-k-1}{d-k}\rfloor >1$, we show that the proposed multi-layer
transformed MDS code has a lower sub-packetization level than the lower bound
given in \cite{balaji2017} and also less than that of the existing
high-code-rate MSR codes with optimal repair access.
\item
We present a binary array version of transformation for binary MDS array
codes. We use EVENODD codes as a motivating example to enable optimal repair
access for each of the chosen $(d-k+1)\eta$ columns. By recursively applying
the transformation for EVENODD codes $\lceil \frac{n}{(d-k+1)\eta}\rceil$
times, the obtained multi-layer transformed EVENODD codes have optimal repair
access for any single failed column and have lower sub-packetization level than that
of the existing binary MDS array codes with optimal repair access.
\end{enumerate}

Compared to the existing constructions \cite{li2017,ye2017,ye2017explicit1}
of MDS codes with optimal repair access, the main reason that our multi-layer
transformed MDS codes achieve lower sub-packetization level is that we restrict the
choice of $d$ specific helper nodes in repairing a failed node.  This enables
us to design a generic transformation for MDS codes to achieve optimal repair
access for each of the chosen $(d-k+1)\eta$ nodes.  We can design the
multi-layer transformed MDS codes by applying the proposed generic
transformation for any MDS codes $\lceil \frac{n}{(d-k+1)\eta}\rceil$ times,
such that the sub-packetization level is
$(d-k+1)^{\lceil \frac{n}{(d-k+1)\eta}\rceil}$ and the repair access is
optimal for any single node.  Our multi-layer transformed MDS codes also
require that the underlying field size to be sufficiently large, so as to
maintain the MDS property.

There exist some similar transformations \cite{li2017,hou2018,li2019} for MDS
codes to enable optimal repair bandwidth of some nodes.  The main differences
between our transformation and the transformations in
\cite{li2017,hou2018,li2019} are summarized as follows.  First, Li
\emph{et al.} \cite{li2017} propose a transformation for MDS codes to enable
optimal repair bandwidth for any of the chosen $r$ nodes. Our transformation
enables optimal repair bandwidth for any of the chosen $(d-k+1)\eta$ nodes.
The transformation in \cite{li2017} can be viewed as a special case of our
transformation with $\eta=1$ and $d=n-1$. Second, the transformations in
\cite{hou2018,li2019} can be viewed as the variants of the transformation in
\cite{li2017} that are designed for binary MDS array codes. With a slightly
modification, our transformation is also applicable to binary MDS array codes
and the transformation in \cite{hou2018,li2019} can be viewed as a special
case of our binary version of transformation.

The rest of the paper is organized as follows.
Section~\ref{sec:example} presents motivating examples that show the main
ideas of the multi-layer transformed MDS codes.
Section~\ref{sec:trans} presents the transformation that can be applied to any
MDS code to enable optimal repair access for each of the chosen $(d-k+1)\eta$
nodes.
Section~\ref{sec:const} shows the construction of the multi-layer transformed
MDS codes with optimal repair access for any single node by recursively
applying the proposed transformation for any MDS code given in Section
\ref{sec:trans}.
Section~\ref{sec:trans-binary} shows that we can obtain the multi-layer
transformed EVENODD codes that have optimal repair access for any single
column by recursively applying the transformation for EVENODD codes.
Section~\ref{sec:com} shows that the obtained multi-layer transformed MDS
codes with optimal repair access have lower sub-packetization level than that of the
existing MDS codes with optimal repair access.
Section~\ref{sec:conclu} concludes the paper.

\section{Motivating Examples}
\label{sec:example}

In this section, we present two motivating examples with $n=8$, $k=5$, and
$d=6$. The first example is obtained by applying the transformation given in
Section~\ref{sec:trans} for the first $(d-k+1) \lfloor
\frac{n-k-1}{d-k}\rfloor=4$ nodes, so as to achieve optimal repair access for
each of the first four nodes with the sub-packetization level
$\alpha=d-k+1=2$. The second example shows the multi-layer transformed code in
Section~\ref{sec:const} to achieve optimal repair access for each node with
the sub-packetization level $\alpha=(d-k+1)^{\left\lceil \frac{n}{(d-k+1)\lfloor
\frac{n-k-1}{d-k}\rfloor}\right\rceil}=4$. Note that the second example is
obtained by recursively applying the transformation given in
Section~\ref{sec:trans} twice.

\subsection{First Example}

Suppose that a data file contains 10 data symbols that are denoted by
$$a^{1}_{1},a^{1}_{2},a^{1}_{3},a^{1}_{4},a^{1}_{5},a^{2}_{1},a^{2}_{2},a^{2}_{3},a^{2}_{4},a^{2}_{5}$$
over the finite field $\mathbb{F}_q$.  We first compute six coded symbols
$a^1_6,a^1_7,a^1_8,a^2_6,a^2_7,a^2_8$ by
\begin{align*}
\begin{bmatrix}
a^1_6 & a^1_7 &a^1_8\\
a^2_6 & a^2_7 &a^2_8
\end{bmatrix}=\begin{bmatrix}
a^1_1 & a^1_2 &a^1_3 &a^1_4 &a^1_5 \\
a^2_1 & a^2_2 &a^2_3 &a^2_4 &a^2_5
\end{bmatrix}\begin{bmatrix}
1 & p_1 & p_1^2 \\
1 & p_2 & p_2^2 \\
1 & p_3 & p_3^2 \\
1 & p_4 & p_4^2 \\
1 & p_5 & p_5^2 \\
\end{bmatrix},
\end{align*}
where $p_1,p_2,p_3,p_4,p_5$ are distinct and non-zero elements in $\mathbb{F}_q$.

\begin{table*}[!t]
\caption{The storage with $n=8$, $k=5$ and $d=6$ by applying the transformation for the first four nodes, where $e_1,e_2$ are field elements in $\mathbb{F}_q$ except zero and one.}
\begin{center}
\begin{tabular}{|c|c|c|c|c|c|c|c|}
\hline
Node 1 & Node 2  & Node 3  & Node 4 & Node 5  & Node 6  & Node 7  & Node 8 \\
\hline
$a^1_{1}$& $a^1_{2}+a^2_{1}$ & $a^1_{3}$& $a^1_{4}+a^2_{3}$ & $a^1_{5}$ & $a^1_{6}$ & $a^1_{7}$ & $a^1_{8}$ \\
\hline
$a^2_{1}+e_1a^1_2$& $a^2_{2}$ &$a^2_{3}+e_2c^1_4$& $a^2_{4}$ & $a^2_{5}$ & $a^2_{6}$ & $a^2_{7}$ & $a^2_{8}$ \\
\hline
\end{tabular}
\end{center}
\label{table:moti-example}
\end{table*}

Table \ref{table:moti-example} shows the storage by applying the transformation given in Section \ref{sec:trans} for the first four nodes, where
$e_1,e_2$ are field elements in $\mathbb{F}_q$ except zero and one.
We can repair the two symbols in each of the first four nodes of the code in
Table \ref{table:moti-example} by accessing one symbol from each of the chosen
$d=6$ nodes, and the repair access is optimal. Suppose that node 1 fails. We
can recover the two symbols stored in node 1 by downloading the following six
symbols.
\begin{align*}
a^1_2+a^2_1,a^1_3,a^1_5,a^1_6,a^1_7,a^1_8.
\end{align*}
Specifically, we can first compute $a^1_1,a^1_2,a^1_4$ from $a^1_3,a^1_5,a^1_6,a^1_7,a^1_8$ by
\begin{align*}
\begin{bmatrix}
a^1_1 & a^1_2 & a^1_4\\
\end{bmatrix}=\begin{bmatrix}
a^1_6-a^1_3-a^1_5 &
a^1_7-p_3a^1_3-p_5a^1_5 &
a^1_8-p_3^2a^1_3-p_5^2a^1_5\\
\end{bmatrix}\cdot \begin{bmatrix}
1 & p_1 & p_1^2\\
1 & p_2 & p_2^2\\
1 & p_4 & p_4^2\\
\end{bmatrix}^{-1}.
\end{align*}
Then, we can recover $a^2_1+e_1a^1_2$ by $a^2_1+e_1a^1_2=(a^1_2+a^2_1)+(e_1-1)a^1_2$.
We can also repair node~3 by downloading the following symbols
\begin{align*}
a^1_1,a^1_4+a^2_3,a^1_5,a^1_6,a^1_7,a^1_8,
\end{align*}
from nodes $1,4,5,6,7,8$.
We can repair node 2 and node 4 by downloading the symbols
\begin{align*}
a^2_1+e_1a^1_2,a^2_4,a^2_5,a^2_6,a^2_7,a^2_8,
\end{align*}
from nodes $1,4,5,6,7,8$ and
\begin{align*}
a^2_2,a^2_3+e_2a^1_4,a^2_5,a^2_6,a^2_7,a^2_8+a^4_7,
\end{align*}
from nodes $2,3,5,6,7,8$, respectively.

\subsection{Second Example}
\label{sec:sec-exam}
Suppose now that a data file contains 20 data symbols that are denoted as
$c^{\ell}_{1},c^{\ell}_{2},c^{\ell}_{3},c^{\ell}_{4},c^{\ell}_{5}$ over finite
field $\mathbb{F}_q$ with $\ell=1,2,3,4$. In the example, we have eight nodes
with each node storing four symbols.
The code of the second example can be obtained by applying the transformation in Section \ref{sec:trans} for the last four nodes of the first example. The construction of the second example is given as follows.
We first generate two instances of the first example. Specifically, we
first compute 12 coded symbols $c^\ell_6,c^\ell_7,c^\ell_8$ with $\ell=1,2,3,4$ by
\begin{align*}
\begin{bmatrix}
c^\ell_6 & c^\ell_7 &c^\ell_8
\end{bmatrix}=\begin{bmatrix}
c^\ell_1 & c^\ell_2 &c^\ell_3 &c^\ell_4 &c^\ell_5
\end{bmatrix}\begin{bmatrix}
1 & p_1 & p_1^2 \\
1 & p_2 & p_2^2 \\
1 & p_3 & p_3^2 \\
1 & p_4 & p_4^2 \\
1 & p_5 & p_5^2 \\
\end{bmatrix},
\end{align*}
where $p_1,p_2,p_3,p_4,p_5$ are distinct and non-zero elements in $\mathbb{F}_q$, and then obtain the two instances of the first example as
\begin{equation}
\begin{bmatrix}
c^1_1 & c^1_2+c^2_1 & c^1_3 & c^1_4+c^2_3 & c^1_5 & c^1_6 & c^1_7 & c^1_8 \\
c^2_1+e_1c^1_2 & c^2_2 & c^2_3+e_2c^1_4 & c^2_4 & c^2_5 & c^2_6 & c^2_7 & c^2_8 \\
c^3_1 & c^3_2+c^4_1 & c^3_3 & c^3_4+c^4_3 & c^3_5 & c^3_6 & c^3_7 & c^3_8 \\
c^4_1+e_1c^3_2 & c^4_2 & c^4_3+e_2c^3_4 & c^4_4 & c^4_5 & c^4_6 & c^4_7 & c^4_8 \\
\end{bmatrix}.
\label{eq:two-ins}
\end{equation}

Table~\ref{table:A3} shows the example of the multi-layer transformed code in
Section~\ref{sec:const}, obtained by recursively applying the transformation
given in Section~\ref{sec:trans}  twice, where $e_1,e_2,e_3,e_4$ are field
elements in $\mathbb{F}_q$ except zero and one. The multi-layer transformed
code in Table \ref{table:A3} has optimal repair access for each of all eight
nodes. In the following, we show the detailed repair method for the
multi-layer transformed code in Table \ref{table:A3}.

\begin{table*}[!t]
\caption{The multi-layer transformed code with $n=8$, $k=5$ and $d=6$ by recursively applying the transformation twice, where $e_1,e_2,e_3,e_4$ are field elements in $\mathbb{F}_q$ except zero and one.}
\begin{center}
\begin{tabular}{|c|c|c|c|c|c|c|c|}
\hline
Node 1 & Node 2  & Node 3  & Node 4 & Node 5  & Node 6  & Node 7  & Node 8 \\
\hline
$c^1_{1}$& $c^1_{2}+c^2_{1}$ & $c^1_{3}$& $c^1_{4}+c^2_{3}$ & $c^1_{5}$ & $c^1_{6}+c^3_{5}$ & $c^1_{7}$ & $c^1_{8}+c^3_{7}$ \\
\hline
$c^2_{1}+e_1c^1_2$& $c^2_{2}$ &$c^2_{3}+e_2c^1_4$& $c^2_{4}$ & $c^2_{5}$ & $c^2_{6}+c^4_{5}$ & $c^2_{7}$ & $c^2_{8}+c^4_{7}$ \\
\hline
$c^3_{1}$& $c^3_{2}+c^4_{1}$ & $c^3_{3}$& $c^3_{4}+c^4_{3}$ & $c^3_{5}+e_3c^1_{6}$ & $c^3_{6}$ & $c^3_{7}+e_4c^1_{8}$ & $c^3_{8}$ \\
\hline
$c^4_{1}+e_1c^3_2$& $c^4_{2}$ &$c^4_{3}+e_2c^3_4$& $c^4_{4}$ & $c^4_{5}+e_3c^2_{6}$ & $c^4_{6}$ & $c^4_{7}+e_4c^2_{8}$ & $c^4_{8}$ \\
\hline
\end{tabular}
\end{center}
\label{table:A3}
\end{table*}

We demonstrate that we can repair the four symbols in each node of the multi-layer transformed code in Table \ref{table:A3} by accessing two symbols from each
of the chosen $d=6$ nodes. Suppose that node 1 fails. We can recover the four symbols stored in node 1 by downloading the first symbol and the third symbol from nodes $2,3,5,6,7,8$, i.e., by downloading the following symbols.
\begin{align*}
&c^1_2+c^2_1,c^1_3,c^1_5,c^1_6+c^3_5,c^1_7,c^1_8+c^3_7,\\
&c^3_2+c^4_1,c^3_3,c^3_5+e_3c^1_6,c^3_6,c^3_7+e_4c^1_8,c^3_8.
\end{align*}

\begin{figure*}
\centering
\includegraphics[width=0.79\textwidth]{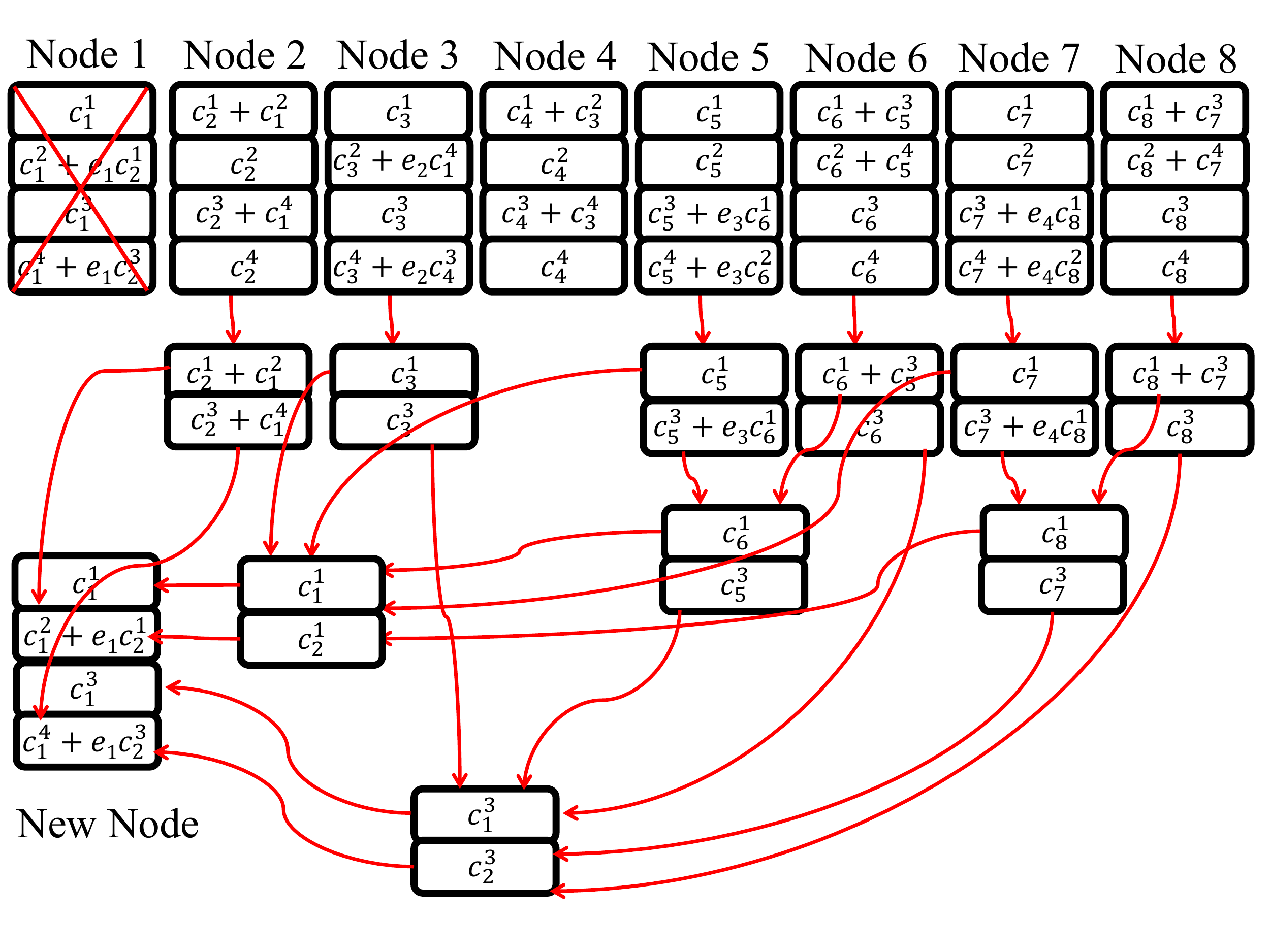}
\caption{The repair procedure of node 1 of the example given in Section~\ref{sec:example} with $k=5$, $r=3$ and $d=6$}
\label{example}
\end{figure*}

Fig. \ref{example} shows the detailed repair procedure of node 1. Specifically, we can first compute $c^1_6$ and $c^3_5$ from $c^1_6+c^3_5$ and $c^3_5+e_3c^1_6$, and compute $c^1_8$ and $c^3_7$ from $c^1_8+c^3_7$ and $c^3_7+e_4c^1_8$, as $e_3\neq 1$ and $e_4\neq 1$. Then, we can obtain $c^1_1,c^1_2,c^1_4$ and $c^3_1,c^3_2,c^3_4$ from $c^1_3,c^1_5,c^1_6,c^1_7,c^1_8$ and $c^3_3,c^3_5,c^3_6,c^3_7,c^3_8$ by
\begin{align*}
\begin{bmatrix}
c^1_1 & c^1_2 & c^1_4\\
\end{bmatrix}=\begin{bmatrix}
c^1_6-c^1_3-c^1_5 &
c^1_7-p_3c^1_3-p_5c^1_5 &
c^1_8-p_3^2c^1_3-p_5^2c^1_5\\
\end{bmatrix}\cdot \begin{bmatrix}
1 & p_1 & p_1^2\\
1 & p_2 & p_2^2\\
1 & p_4 & p_4^2\\
\end{bmatrix}^{-1},
\end{align*}
and
\begin{align*}
\begin{bmatrix}
c^3_1 & c^3_2 & c^3_4\\
\end{bmatrix}=\begin{bmatrix}
c^3_6-c^3_3-c^3_5 &
c^3_7-p_3c^3_3-p_5c^3_5 &
c^3_8-p_3^2c^3_3-p_5^2c^3_5\\
\end{bmatrix}\cdot \begin{bmatrix}
1 & p_1 & p_1^2\\
1 & p_2 & p_2^2\\
1 & p_4 & p_4^2\\
\end{bmatrix}^{-1},
\end{align*}
respectively.
Finally, we can recover $c^2_1+e_1c^1_2$ and $c^4_1+e_1c^3_2$ by $c^2_1+e_1c^1_2=(c^1_2+c^2_1)+(e_1-1)c^1_2$ and $c^4_1+e_1c^3_2=(c^3_2+c^4_1)+(e_1-1)c^3_2$, respectively.
We can also repair node 3 by downloading the following symbols
\begin{align*}
&c^1_1,c^1_4+c^2_3,c^1_5,c^1_6+c^3_5,c^1_7,c^1_8+c^3_7,\\
&c^3_1,c^3_4+c^4_3,c^3_5+e_3c^1_6,c^3_6,c^3_7+e_4c^1_8,c^3_8,
\end{align*}
from nodes $1,4,5,6,7,8$.
Node 2 and node 4 can be repaired by downloading the symbols
\begin{align*}
&c^2_1+e_1c^1_2,c^2_4,c^2_5,c^2_6+c^4_5,c^2_7,c^2_8+c^4_7,\\
&c^4_1+e_1c^3_2,c^4_4,c^4_5+e_3c^2_6,c^4_6,c^4_7+e_4c^2_8,c^4_8,
\end{align*}
from nodes $1,4,5,6,7,8$ and
\begin{align*}
&c^2_2,c^2_3+e_2c^1_4,c^2_5,c^2_6+c^4_5,c^2_7,c^2_8+c^4_7,\\
&c^4_2,c^4_3+e_2c^3_4,c^4_5+e_3c^2_6,c^4_6,c^4_7+e_4c^2_8,c^4_8,
\end{align*}
from nodes $2,3,5,6,7,8$, respectively.

Similarly, we can repair node 5 and node 7 by downloading the symbols
\begin{align*}
&c^1_1,c^1_2+c^2_1,c^1_3,c^1_4+c^2_3,c^1_6+c^3_5,c^1_7,\\
&c^2_1+e_1c^1_2,c^2_2,c^2_3+e_2c^4_1,c^2_4,c^2_6+c^4_5,c^2_7,
\end{align*}
from nodes $1,2,3,4,6,7$ and
\begin{align*}
&c^1_1,c^1_2+c^2_1,c^1_3,c^1_4+c^2_3,c^1_5,c^1_8+c^3_7,\\
&c^2_1+e_1c^1_2,c^2_2,c^2_3+e_2c^4_1,c^2_4,c^2_5,c^2_8+c^4_7,
\end{align*}
from nodes $1,2,3,4,5,8$, respectively, and repair node 6 and node 8 by downloading the symbols
\begin{align*}
&c^3_1,c^3_2+c^4_1,c^3_3,c^3_4+c^4_3,c^3_5+e_3c^1_6,c^3_8,\\
&c^4_1+e_1c^3_2,c^4_2,c^4_3+e_2c^3_4,c^4_4,c^4_5+e_3c^2_6,c^4_8,
\end{align*}
from nodes $1,2,3,4,5,8$ and
\begin{align*}
&c^3_1,c^3_2+c^4_1,c^3_3,c^3_4+c^4_3,c^3_6,c^3_7+e_4c^1_8,\\
&c^4_1+e_1c^3_2,c^4_2,c^4_3+e_2c^3_4,c^4_4,c^4_6,c^4_7+e_4c^2_8,
\end{align*}
from nodes $1,2,3,4,6,7$, respectively.

If the field size $q$ is large enough, we can always find the field elements $p_1,p_2,p_3,p_4,p_5,e_1,e_2,e_3,e_4$ in $\mathbb{F}_q$ such that any five out of the eight nodes can retrieve all 20 data symbols by Theorem \ref{thm:hybridcons} given in Section \ref{sec:mds}.

\section{A Generic Transformation}
\label{sec:trans}
In this section, we propose a generic transformation for MDS codes that can convert any MDS code into another MDS code that has optimal repair access for each of the chosen nodes and possesses low sub-packetization level. The multi-layer transformed MDS codes with optimal repair access for any single node given in the next section are obtained by recursively applying the transformation given in this section.

An $(n,k)$ MDS code encodes $k$ \emph{data symbols} $s_1,s_2,\ldots,s_k$ over finite field
$\mathbb{F}_q$ into $n$ \emph{coded symbols} $c_1,c_2,\ldots,c_n$ by
\[
\begin{bmatrix}
c_1 & c_2 & \cdots & c_{n}
\end{bmatrix}=
\begin{bmatrix}
s_1 & s_2 & \cdots & s_{k}
\end{bmatrix}\cdot \mathbf{G}_{k\times n},
\]
where $\mathbf{G}_{k\times n}$ is a $k\times n$ matrix. Any $k\times k$ sub-matrix of $\mathbf{G}_{k\times n}$
must be non-singular, in order to maintain the MDS property.
Typically, we can choose the matrix $\mathbf{G}_{k\times n}$
to be a $k\times n$ Vandermonde matrix or Cauchy matrix.
If we want to generate the systematic version of the code, i.e., $c_i=s_i$ for $i=1,2,\ldots,k$,
then the matrix $\mathbf{G}_{k\times n}$ is composed of a $k\times k$ identity matrix $\mathbf{I}_{k\times k}$ and
a $k\times (r=n-k)$ encoding matrix $\mathbf{P}_{k\times r}$, i.e.,
\begin{equation}
\mathbf{G}_{k\times n}=\begin{bmatrix}
\mathbf{I}_{k\times k} & \mathbf{P}_{k\times r} \end{bmatrix},
\label{eq:generator}
\end{equation}
where
\[
\mathbf{P}_{k\times r}=\begin{bmatrix}
p_{1,1} & p_{1,2} & \cdots & p_{1,r}\\
p_{2,1} & p_{2,2} & \cdots & p_{2,r}\\
\vdots & \vdots & \ddots & \vdots \\
p_{k,1} & p_{k,2} & \cdots & p_{k,r}\\
\end{bmatrix}.
\]
To ensure the MDS property, all the square sub-matrices of $\mathbf{P}_{k\times r}$ should be invertible.

\subsection{The Transformation}
\label{sec:transformation}
Next we present a transformation on an $(n,k)$ MDS code to generate an $(n,k)$ transformed MDS  code with $\alpha=(d-k+1)$. In the transformed codes, we have $(d-k+1)\cdot k$ data symbols
$s^\ell_1,s^\ell_2,\ldots,s^\ell_k$ with $\ell=1,2,\ldots,d-k+1$, where $k+1\leq d\leq k+r-1$.
We can compute $(d-k+1)\cdot n$ {\em coded symbols} $c^\ell_1,c^\ell_2,\ldots,c^\ell_n$ by
\[
\begin{bmatrix}
c^\ell_1 & c^\ell_2 & \cdots & c^\ell_{n}
\end{bmatrix}=
\begin{bmatrix}
s^\ell_1 & s^\ell_2 & \cdots & s^\ell_{k}
\end{bmatrix}\cdot \mathbf{G}_{k\times n},
\]
where $\mathbf{G}_{k\times n}$ is given in \eqref{eq:generator} and $\ell=1,2,\ldots,d-k+1$. Let $t=d-k+1$ and denote $\eta$ as
\[
\eta=\left\lfloor \frac{r-1}{d-k}\right\rfloor.
\]
For $i=1,2,\ldots, t$ and $j=1,2,\ldots,\eta$, node $(j-1)t+i$ stores the following $t$
symbols
\begin{equation}
\begin{array}{ll}
& c^1_{(j-1)t+i}+c^i_{(j-1)t+1},\\
& c^2_{(j-1)t+i}+c^i_{(j-1)t+2},\ldots,\\
& c^{i-1}_{(j-1)t+i}+c^i_{(j-1)t+i-1},\\
& c^i_{(j-1)t+i},\\
& c^{i+1}_{(j-1)t+i}+e_jc^i_{(j-1)t+i+1},\\
& c^{i+2}_{(j-1)t+i}+e_jc^i_{(j-1)t+i+2},\ldots,\\
& c^{t}_{(j-1)t+i}+e_jc^i_{(j-1)t+t},
\end{array}
\label{eq:trans-symbol}
\end{equation}
where $e_j$ is an element from the finite field except zero and one.
For $h=t\cdot \eta+1, t\cdot \eta+2,\ldots,n$, node $h$ stores $t$ symbols
\[
c^1_{h},c^2_{h},\ldots,c^{t}_{h}.
\]
The obtained codes are called {\em transformed codes}, which are denoted as $\mathcal{C}_1(n,k,\eta,t)$. Note that the transformation
in \cite{li2017} can be viewed as a special case of our transformation with $d=n-1$ and
$\eta=1$. Table \ref{table:A1} shows an example of the
transformed codes with $n=11$, $k=6$, $r=5$, $d=8$ and $\eta=2$.

\begin{table*}[!t]
\caption{The transformed codes with $n=11$,
$k=6$, $r=5$, $d=8$ and $\eta=2$.}
\scriptsize
\begin{center}
\begin{tabular}{|c|c|c|c|c|c|c|c|c|c|c|}
\hline
Node 1 & Node 2  & Node 3  & Node 4 & Node 5  & Node 6  & Node 7  & Node 8 & Node 9  & Node 10  & Node 11 \\
\hline
$c^1_{1}$& $c^1_{2}+c^2_{1}$ & $c^1_{3}+c^3_{1}$& $c^1_{4}$ & $c^1_{5}+c^2_{4}$ & $c^1_{6}+c^3_{4}$ & $c^1_{7}$ & $c^1_{8}$ & $c^1_{9}$ & $c^1_{10}$ & $c^1_{11}$ \\
\hline
$c^2_{1}+e_1c^1_2$& $c^2_{2}$ &$c^2_{3}+c^3_2$& $c^2_{4}+e_2c^1_{5}$ & $c^2_{5}$ & $c^2_{6}+c^3_{5}$ & $c^2_{7}$ & $c^2_{8}$ & $c^2_{9}$ & $c^2_{10}$ & $c^2_{11}$ \\
\hline
$c^3_{1}+e_1c^1_3$& $c^3_2+e_1c^2_{3}$ &$c^3_{3}$& $c^3_{4}+e_2c^1_{6}$ & $c^3_{5}+e_2c^2_{6}$ & $c^3_{6}$ & $c^3_{7}$ & $c^3_{8}$ & $c^3_{9}$ & $c^3_{10}$ & $c^3_{11}$ \\
\hline
\end{tabular}
\end{center}
\label{table:A1}
\end{table*}

\begin{lemma}
Given integers $\ell$, $i$ and $j$, we can compute
\begin{enumerate}
\item $c^i_\ell$ and $c^j_\ell$ from $c^i_\ell+c^j_\ell$ and $c^i_\ell+ec^j_\ell$, if $e\neq 1$;
\item $c^i_\ell+c^j_\ell$ from $c^i_\ell+ec^j_\ell$ and $c^i_\ell$, if $e\neq 0$;
\item $c^i_\ell+ec^j_\ell$ from $c^i_\ell+c^j_\ell$ and $c^i_\ell$, if $e\neq 0$.
\end{enumerate}
\label{lm:trans}
\end{lemma}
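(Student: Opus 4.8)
The plan is to treat each of the three items as an elementary invertibility statement for a linear system over $\mathbb{F}_q$ in the two unknowns $c^i_\ell$ and $c^j_\ell$, and to check that the restriction imposed on $e$ is precisely the condition under which the desired quantities are determined. Since two linear combinations of two unknowns suffice to recover both unknowns exactly when the associated coefficient matrix is nonsingular, each claim reduces to verifying a determinant (or a single nonzero scalar) condition and then reading off the requested output as a further linear combination.

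For item~1 I would write the two given symbols as the image of $(c^i_\ell, c^j_\ell)$ under a $2\times 2$ matrix, namely
\[
\begin{bmatrix} c^i_\ell+c^j_\ell \\ c^i_\ell+e\,c^j_\ell \end{bmatrix}
=
\begin{bmatrix} 1 & 1 \\ 1 & e \end{bmatrix}
\begin{bmatrix} c^i_\ell \\ c^j_\ell \end{bmatrix},
\]
and then observe that the coefficient matrix has determinant $e-1$, which is nonzero exactly when $e\neq 1$. Under this hypothesis the matrix is invertible, so both $c^i_\ell$ and $c^j_\ell$ are recoverable; concretely, subtracting the first given symbol from the second yields $(e-1)c^j_\ell$, whence $c^j_\ell$ follows after dividing by $e-1$, and then $c^i_\ell$ follows by subtraction.

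For items~2 and~3 the symbol $c^i_\ell$ is supplied directly, so the strategy is to recover the remaining unknown $c^j_\ell$ first and then assemble the requested combination. In item~2, subtracting $c^i_\ell$ from $c^i_\ell+e\,c^j_\ell$ gives $e\,c^j_\ell$, and dividing by $e$ (legitimate since $e\neq 0$) yields $c^j_\ell$; adding back $c^i_\ell$ produces $c^i_\ell+c^j_\ell$. In item~3, subtracting $c^i_\ell$ from $c^i_\ell+c^j_\ell$ gives $c^j_\ell$ outright, after which scaling by $e$ and adding $c^i_\ell$ produces $c^i_\ell+e\,c^j_\ell$.

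I do not expect any genuine obstacle in this proof; the only point demanding care is to confirm that the single field division invoked in each argument is admissible, namely division by $e-1$ in item~1 and by $e$ in item~2, which is guaranteed by the respective hypotheses $e\neq 1$ and $e\neq 0$. All three statements thus collapse to the nonsingularity of a $2\times 2$ or $1\times 1$ matrix over $\mathbb{F}_q$, so the argument is a short direct computation rather than anything structural.
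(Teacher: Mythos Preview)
Your proposal is correct and essentially matches the paper's own proof: the paper simply subtracts the two given symbols in item~1 to isolate $(1-e)c^j_\ell$, divides by $1-e$, and then recovers $c^i_\ell$ by subtraction, remarking that items~2 and~3 are proved similarly. Your version is slightly more explicit (the $2\times 2$ matrix formulation and the full write-ups of items~2 and~3), but the underlying argument is identical.
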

\begin{proof}
Consider the first claim, we can obtain $c^j_\ell$ by
\begin{align*}
c^j_\ell=\{(c^i_\ell+c^j_\ell)-(c^i_\ell+ec^j_\ell)\}{(1-e)^{-1}},
\end{align*}
and further compute $c^i_\ell$ by $(c^i_\ell+c^j_\ell)-c^j_\ell$. The other two claims can be proved similarly.
\end{proof}

\subsection{Optimal Repair Access}

We show in the next theorem that the first $t\cdot \eta$ nodes of the transformed codes have optimal repair access.

\begin{theorem}
We can recover each of the first $t \cdot \eta$ nodes of the transformed codes by accessing $d$ symbols from specified  $d$ nodes.
\label{thm:opt-repair}
\end{theorem}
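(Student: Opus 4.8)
The plan is to fix an arbitrary failed node among the first $t\cdot\eta$ nodes, say node $(j-1)t+i$ for some $1\le j\le\eta$ and $1\le i\le t$, and exhibit an explicit set of $d$ helper nodes together with one accessed symbol from each, from which all $t$ stored symbols of the failed node can be reconstructed. The natural choice of helper nodes, guided by the motivating examples, is to take the remaining $t-1$ nodes in the same ``block'' $\{(j-1)t+1,\ldots,jt\}$ as the failed node, together with the last $n-t\eta=r-t+1$ systematic-tail nodes indexed $t\eta+1,\ldots,n$, plus enough further nodes to reach exactly $d$ helpers; since $d-k+1=t$ and there are $r=n-k$ parity-type positions, one checks the counting $(t-1)+(n-t\eta)+(\text{extra})=d$ works out precisely because $\eta=\lfloor(r-1)/(d-k)\rfloor$. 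From each helper I would access only the symbol lying in ``layer $i$'', i.e.\ the row-$i$ component of that node's stored vector, so that a total of $d$ symbols are accessed, matching the claimed optimal repair access $\beta=\alpha/(d-k+1)=1$ per helper.

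The key steps, in order, are as follows. First I would observe that the symbols accessed in layer $i$ from the tail nodes $t\eta+1,\ldots,n$ are the \emph{uncontaminated} coded symbols $c^i_{h}$, and that the layer-$i$ symbol read from node $(j-1)t+i'$ within the block is (up to the transformation) of the form $c^i_{(j-1)t+i'}$ plus a term involving $c^{i'}_{\star}$; the crucial structural point is that along the fixed layer $i$ the accessed symbols expose exactly the codeword $(c^i_1,c^i_2,\ldots,c^i_n)$ of the underlying $(n,k)$ MDS code, possibly after cancelling cross-layer interference. Second, I would invoke Lemma~\ref{lm:trans} to strip off that interference: because each $e_j\ne 0,1$, any pairing of a ``$+c$'' symbol with its ``$+e_j c$'' partner lets us solve for the clean components, so from the accessed symbols we can recover the $d$ values $\{c^i_\ell\}$ at the $d$ helper positions of the base MDS code. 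Third, since the base code is MDS and $d\ge k+1$, any $d\ge k$ of its coordinates determine the whole layer-$i$ codeword, so in particular we recover the failed node's layer-$i$ coordinate $c^i_{(j-1)t+i}$, which by construction is stored \emph{uncontaminated} in row $i$ of the failed node. Fourth, and finally, I would recover the remaining $t-1$ stored symbols of the failed node (the contaminated entries in rows $\ell\ne i$): each such entry is $c^\ell_{(j-1)t+i}+(\text{coefficient})\,c^i_{(j-1)t+\ell'}$, where both the base coordinate $c^\ell_{(j-1)t+i}$ (obtainable from the already-reconstructed layer-$\ell$ data, or directly once the relevant $c^i$ values and the MDS relations are in hand) and the interfering term $c^i_{(j-1)t+\ell'}$ have by then been determined, so each contaminated symbol follows by a single field addition.

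I expect the main obstacle to be the \textbf{bookkeeping of indices and the verification that the chosen $d$ helper positions really do supply $k$ independent coordinates of the layer-$i$ base codeword after interference cancellation}. Concretely, one must argue that within the failed block the layer-$i$ accesses, combined with their cross-layer partners via Lemma~\ref{lm:trans}, yield genuinely new coordinates $c^i_\ell$ rather than redundant or unresolvable combinations; this is where the hypothesis $\eta\le\lfloor(r-1)/(d-k)\rfloor$ is used, guaranteeing enough ``clean'' tail nodes that the pairing structure never leaves an unpaired contaminated symbol in layer $i$. I would handle this by separating the helper set into the three groups above and tracking, for each, exactly which base coordinate $c^i_\ell$ its accessed symbol reveals after at most one application of Lemma~\ref{lm:trans}, then checking that the resulting index set has size $\ge k$ and lies among the $d$ permitted coordinates. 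The remaining arguments (MDS decoding and the final additions) are routine given this verification.
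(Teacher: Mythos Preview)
Your high-level picture---access the row-$i$ symbol from each of $d$ helpers and rebuild node $(j-1)t+i$ from layer $i$---is the right one, but two concrete steps do not go through as written.

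First, the unspecified ``extra'' helpers matter. The tail nodes $t\eta+1,\ldots,n$ number only $n-t\eta$, which can be strictly smaller than $k$ (e.g.\ $n=8$, $k=5$, $d=6$ gives $n-t\eta=4<5$), so the tail alone does not supply $k$ clean layer-$i$ coordinates. The paper's choice of the remaining helpers is not arbitrary: it takes the \emph{diagonal} nodes $(j'-1)t+i$ for $j'\ne j$, because only there (besides the tail) is the row-$i$ entry the bare symbol $c^i_{(j'-1)t+i}$. Any other choice of extra helper hands you a row-$i$ entry of the form $c^i_\star+\text{(cross-layer term)}$ with no way to remove that term from the accessed data. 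This is precisely why the paper verifies the count $\eta-1+(n-t\eta)\ge k$.

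Second, your Step~2 and Step~4 have the mechanism inverted. For the $t-1$ same-block helpers $(j-1)t+\ell$ with $\ell\ne i$, the accessed row-$i$ symbol is $c^i_{(j-1)t+\ell}+(\text{coeff})\,c^\ell_{(j-1)t+i}$, and its Lemma~\ref{lm:trans} ``partner'' lives in the \emph{failed} node, so you cannot pair them to strip the interference and obtain $c^i_{(j-1)t+\ell}$ before any MDS decoding. The paper instead first uses the $k$ clean $c^i$-coordinates from the tail and diagonal helpers, applies the MDS property to reconstruct \emph{all} of layer $i$ (in particular $c^i_{(j-1)t+1},\ldots,c^i_{(j-1)t+t}$), and only then invokes Lemma~\ref{lm:trans} (parts 2--3): each now-known $c^i_{(j-1)t+\ell}$ combined with the accessed contaminated symbol yields the failed node's stored row-$\ell$ entry directly. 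No ``layer-$\ell$ data'' for $\ell\ne i$ is ever reconstructed, contrary to what your Step~4 suggests.
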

\begin{proof}
For $i=1,2,\ldots,t$ and $j=1,2,\ldots,\eta$, we show that we can recover $t$ symbols stored in node $(j-1)t+i$ by accessing $k$ symbols $c^{i}_{h_1},c^{i}_{h_2},\ldots,c^{i}_{h_k}$ with
\begin{equation}
\{h_1,\ldots,h_k\}\subset \left\{\begin{array}{ll}
&\{t+i,\ldots,(\eta-1)t+i,t\eta+1,t\eta+2,\ldots,n\} \text{ for } j=1\\
&\{i,\ldots,(j-2)t+i,jt+i,\ldots,(\eta-1)t+i,t\eta+1,\ldots,n\} \text{ for } \eta-1 \geq  j\geq 2\\
&\{i,t+i,\ldots,(\eta-2)t+i,t\eta+1,t\eta+2,\ldots,n\} \text{ for } j=\eta
\end{array}\right.
\label{eq:k-set}
\end{equation}
and $d-k$ symbols
\begin{equation}
\begin{array}{ll}
&c^i_{(j-1)t+1}+e_jc^1_{(j-1)t+i},\ldots,c^{i}_{(j-1)t+i-1}+e_jc^{i-1}_{(j-1)t+i},\\
&c^i_{(j-1)t+i+1}+c^{i+1}_{(j-1)t+i},\ldots,c^i_{(j-1)t+t}+c^{t}_{(j-1)t+i}.
\label{eq:d-k-symbol}
\end{array}
\end{equation}
Note that
\begin{align*}
\eta-1+(n-t\eta)=&n-(d-k)\left\lfloor \frac{n-k-1}{d-k}\right\rfloor -1\\
\geq & k,
\end{align*}
we can thus choose $k$ different values in \eqref{eq:k-set}.

Recall that the $t$ symbols stored in node $(j-1)t+i$ are given in \eqref{eq:trans-symbol}.
By accessing $c^{i}_{h_1},c^{i}_{h_2},\ldots,c^{i}_{h_k}$, we can compute $c^{i}_{(j-1)t+1},c^{i}_{(j-1)t+2},\ldots,c^{i}_{(j-1)t+t}$ according to the MDS property. With the computed $c^{i}_{(j-1)t+1},c^{i}_{(j-1)t+2},\ldots,c^{i}_{(j-1)t+t}$ and the accessed $d-k$ symbols in \eqref{eq:d-k-symbol}, we can compute all $t$ symbols stored in node $(j-1)t+i$ by
Lemma \ref{lm:trans}. Therefore, we can recover node $(j-1)t+i$ by downloading $d$ symbols from $d$ helper nodes and the repair access of node $(j-1)t+i$ is optimal according to \eqref{eq:optimal-repair}.
\end{proof}

Consider the example in Table \ref{table:A1}. We can repair the three symbols $c^1_1$, $c^2_1+e_1c^1_2$ and $c^3_1+e_1c^1_3$ in node 1 by downloading the following eight symbols
\[
c^1_4,c^1_7,c^1_8,c^1_9,c^1_{10},c^1_{11},c^1_2+c^2_1,c^1_3+c^3_1.
\]
Specifically, we can first compute $c^1_1$, $c^1_2$ and $c^1_3$ from the first six symbols of the above downloaded symbols, and then compute $c^2_1+e_1c^1_2$ and $c^3_1+e_1c^1_3$ by
\begin{align*}
c^2_1+e_1c^1_2=&(c^1_2+c^2_1)+(e_1-1)c^1_2,\\
c^3_1+e_1c^1_3=&(c^1_3+c^3_1)+(e_1-1)c^1_3.
\end{align*}
We can repair node 2 and node 3 by downloading
\[
c^2_5,c^2_7,c^2_8,c^2_9,c^2_{10},c^2_{11},c^2_1+e_1c^1_2,c^2_3+c^3_2,
\]
and
\[
c^3_6,c^3_7,c^3_8,c^3_9,c^3_{10},c^3_{11},c^3_1+e_1c^1_3,c^3_2+e_1c^2_3,
\]
respectively. Similarly, we can repair node 4, node 5 and node 6 by downloading
\[
c^1_1,c^1_7,c^1_8,c^1_9,c^1_{10},c^1_{11},c^1_5+c^2_4,c^1_6+c^3_4,
\]
\[
c^2_2,c^2_7,c^2_8,c^2_9,c^2_{10},c^2_{11},c^2_4+e_2c^1_5,c^2_6+c^3_5,
\]
and
\[
c^3_3,c^3_7,c^3_8,c^3_9,c^3_{10},c^3_{11},c^3_4+e_2c^1_6,c^3_5+e_2c^2_6,
\]
respectively.

\subsection{The MDS Property}
\label{sec:mds}
If we put $tk$ data symbols in a vector $\mathbf{s}$ of length $tk$, i.e.,
\[
\mathbf{s}=\begin{bmatrix}
s^1_1 & \cdots & s^1_k & s^2_1 & \cdots  & s^2_k & \cdots  & s^{t}_1 & \cdots  & s^{t}_{k}
\end{bmatrix},
\]
then all $tn$ symbols stored in $n$ nodes are computed as the multiplication of $\mathbf{s}$ and the
$tk\times tn$ generator matrix $\mathbf{G}_{tk \times tn}$.
We can write the matrix $\mathbf{G}_{tk \times tn}$ as
\[
\mathbf{G}_{tk \times tn}=\begin{bmatrix}
\mathbf{G}^1_{tk \times t} & \mathbf{G}^2_{tk \times t} & \cdots & \mathbf{G}^n_{tk \times t}
\end{bmatrix},
\]
where $\mathbf{G}^i_{tk \times t}$ is the generator matrix of node $i$ with $i=1,2,\ldots,n$.
In the example in Table \ref{table:A1}, the generator matrix is
\begin{align*}
\mathbf{G}_{18 \times 33}=&\begin{bmatrix}
\mathbf{G}^1_{18 \times 3} & \mathbf{G}^2_{18 \times 3} & \cdots & \mathbf{G}^{11}_{18 \times 3}
\end{bmatrix},
\end{align*}
where
\begin{align*}
&\begin{bmatrix}
\mathbf{G}^1_{18 \times 3} & \mathbf{G}^2_{18 \times 3} & \mathbf{G}^3_{18 \times 3} &\mathbf{G}^4_{18 \times 3} & \mathbf{G}^5_{18 \times 3} &   \mathbf{G}^{6}_{18 \times 3}
\end{bmatrix}\\
=&\left[\begin{array}{ccc|ccc|ccc|ccc|ccc|ccc}
1 & 0 & 0 & 0 & 0 & 0 & 0 & 0 & 0 & 0 & 0 & 0 & 0 & 0 & 0 & 0 & 0 & 0\\
0 & e_1 & 0 & 1 & 0 & 0 & 0 & 0 & 0 & 0 & 0 & 0 & 0 & 0 & 0 & 0 & 0 & 0\\
0 & 0 & e_1 & 0 & 0 & 0 & 1 & 0 & 0 & 0 & 0 & 0 & 0 & 0 & 0 & 0 & 0 & 0\\
0 & 0 & 0 & 0 & 0 & 0 & 0 & 0 & 0 & 1 & 0 & 0 & 0 & 0 & 0 & 0 & 0 & 0\\
0 & 0 & 0 & 0 & 0 & 0 & 0 & 0 & 0 & 0 & e_2 & 0 & 1 & 0 & 0 & 0 & 0 & 0\\
0 & 0 & 0 & 0 & 0 & 0 & 0 & 0 & 0 & 0 & 0 & e_2 & 0 & 0 & 0 & 1 & 0 & 0\\
0 & 1 & 0 & 1 & 0 & 0 & 0 & 0 & 0 & 0 & 0 & 0 & 0 & 0 & 0 & 0 & 0 & 0\\
0 & 0 & 0 & 0 & 1 & 0 & 0 & 0 & 0 & 0 & 0 & 0 & 0 & 0 & 0 & 0 & 0 & 0\\
0 & 0 & 0 & 0 & 0 & e_1 & 0 & 1 & 0 & 0 & 0 & 0 & 0 & 0 & 0 & 0 & 0 & 0\\
0 & 0 & 0 & 0 & 0 & 0 & 0 & 0 & 0 & 0 & 1 & 0 & 1 & 0 & 0 & 0 & 0 & 0\\
0 & 0 & 0 & 0 & 0 & 0 & 0 & 0 & 0 & 0 & 0 & 0 & 0 & 1 & 0 & 0 & 0 & 0\\
0 & 0 & 0 & 0 & 0 & 0 & 0 & 0 & 0 & 0 & 0 & 0 & 0 & 0 & e_2 & 0 & 1 & 0\\
0 & 0 & 1 & 0 & 0 & 0 & 1 & 0 & 0 & 0 & 0 & 0 & 0 & 0 & 0 & 0 & 0 & 0\\
0 & 0 & 0 & 0 & 0 & 1 & 0 & 1 & 0 & 0 & 0 & 0 & 0 & 0 & 0 & 0 & 0 & 0\\
0 & 0 & 0 & 0 & 0 & 0 & 0 & 0 & 1 & 0 & 0 & 0 & 0 & 0 & 0 & 0 & 0 & 0\\
0 & 0 & 0 & 0 & 0 & 0 & 0 & 0 & 0 & 0 & 0 & 1 & 0 & 0 & 0 & 1 & 0 & 0\\
0 & 0 & 0 & 0 & 0 & 0 & 0 & 0 & 0 & 0 & 0 & 0 & 0 & 0 & 1 & 0 & 1 & 0\\
0 & 0 & 0 & 0 & 0 & 0 & 0 & 0 & 0 & 0 & 0 & 0 & 0 & 0 & 0 & 0 & 0 & 1\\
\end{array}\right].
\end{align*}

In the following, we show that we can always find $e_1,e_2,\ldots,e_\eta$ such that the determinant of the $tk\times tk$ matrix composed of any $k$ out of the $n$ generator matrices
\begin{equation}
\mathbf{G}^1_{tk \times t}, \mathbf{G}^2_{tk \times t}, \cdots, \mathbf{G}^n_{tk \times t}
\label{eq:gene-matrix}
\end{equation}
is non-zero, when the field size is large enough.
We first review the Schwartz-Zippel Lemma, and then present the MDS property
condition.
\begin{lemma} (Schwartz-Zippel \cite{motwani1995})
Let $Q(x_1,\ldots,x_n)\in \mathbb{F}_q[x_1,\ldots,x_n]$ be a non-zero multivariate polynomial
of total degree $d$. Let $r_1,\ldots,r_n$ be chosen independently and uniformly at
random from a subset $\mathbb{S}$ of $\mathbb{F}_q$. Then
\begin{equation}
Pr[Q(r_1,\ldots,r_n)=0]\leq \frac{d}{|\mathbb{S}|}.
\end{equation}
\end{lemma}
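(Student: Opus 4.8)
The plan is to prove the bound by induction on the number of variables $n$. The base case $n=1$ reduces to the classical fact that a non-zero univariate polynomial of degree $d$ over a field has at most $d$ roots; hence among the $|\mathbb{S}|$ equally likely choices of $r_1$, at most $d$ can make $Q$ vanish, which gives $Pr[Q(r_1)=0]\le d/|\mathbb{S}|$ immediately.

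For the inductive step, I would isolate the last variable. Writing $Q$ as a polynomial in $x_n$ with coefficients in $\mathbb{F}_q[x_1,\ldots,x_{n-1}]$, let $k$ be the highest power of $x_n$ that actually appears, so that $Q = \sum_{i=0}^{k} x_n^i\, Q_i(x_1,\ldots,x_{n-1})$ with $Q_k\not\equiv 0$. The key bookkeeping observation is that $\deg Q_k \le d-k$, because every monomial contributing to $Q_k$ is multiplied by $x_n^k$ while $Q$ has total degree $d$. This degree accounting is what makes the two bounds add up cleanly at the end.

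The core of the argument is a case split on whether the leading coefficient vanishes at the random point. Let $A$ be the event $Q_k(r_1,\ldots,r_{n-1})=0$ and let $B$ be the event that $Q(r_1,\ldots,r_n)=0$ while $Q_k(r_1,\ldots,r_{n-1})\neq 0$; then clearly $\{Q=0\}\subseteq A\cup B$. Since $Q_k$ is a non-zero polynomial in $n-1$ variables of total degree at most $d-k$, the inductive hypothesis yields $Pr[A]\le (d-k)/|\mathbb{S}|$. To bound $Pr[B]$, I would condition on any fixed value of $(r_1,\ldots,r_{n-1})$ for which $Q_k\neq 0$: the resulting $Q(r_1,\ldots,r_{n-1},x_n)$ is then a genuine univariate polynomial in $x_n$ of degree exactly $k$ (its leading coefficient is the non-zero scalar $Q_k(r_1,\ldots,r_{n-1})$), so by the base case it vanishes for at most $k$ of the $|\mathbb{S}|$ choices of $r_n$, giving the conditional bound $k/|\mathbb{S}|$. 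Because $r_n$ is drawn independently of $r_1,\ldots,r_{n-1}$, this bound holds uniformly over all conditioning fixings and therefore $Pr[B]\le k/|\mathbb{S}|$. Combining the two pieces via the union bound gives $Pr[Q=0]\le Pr[A]+Pr[B]\le (d-k)/|\mathbb{S}| + k/|\mathbb{S}| = d/|\mathbb{S}|$, which closes the induction.

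The main obstacle is the careful treatment of the conditional event $B$. One must check that after fixing $r_1,\ldots,r_{n-1}$ with $Q_k\neq 0$ the induced univariate polynomial is non-zero of degree exactly $k$, so that the base case supplies the bound $k$ rather than the ambient degree $d$; and one must invoke the independence of $r_n$ to convert the per-fixing conditional bound into an unconditional bound on $Pr[B]$. Everything else is routine once the degree identity $\deg Q_k\le d-k$ has been established.
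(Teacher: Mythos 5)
Your proof is correct and is precisely the classical induction argument for the Schwartz--Zippel lemma: the paper itself offers no proof, stating the lemma as a known result imported from \cite{motwani1995}, and your argument (induction on $n$, splitting on whether the leading coefficient $Q_k$ vanishes, with the degree bookkeeping $\deg Q_k \le d-k$ and the independence of $r_n$ handling the conditional bound) is exactly the standard proof found in that reference. No gaps; the two edge points you flag---that the induced univariate polynomial has degree exactly $k$ on the event $Q_k \neq 0$, and that independence uniformizes the conditional bound---are indeed the only places requiring care, and you handle both correctly.
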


The MDS property condition is given in the following theorem.

\begin{theorem}
If the field size $q$ is larger than
\begin{equation}
\begin{array}{c}
\eta \frac{(t-1)t}{2}\Bigg(\dbinom{n}{k}-\displaystyle\sum_{\ell=0}^{\eta}\dbinom{n-\eta t}{k-\ell t}\cdot \dbinom{\eta}{\ell}
\Bigg),
\label{field-size2}
\end{array}
\end{equation}
then there exist $\eta$ variables $e_1,e_2,\ldots,e_\eta$ over
$\mathbb{F}_q$ such that the determinant of the $tk\times tk$ matrix composed of any $k$ generator matrices in \eqref{eq:gene-matrix} is non-zero.
\label{thm:hybridcons}
\end{theorem}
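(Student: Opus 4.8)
The plan is to turn the MDS requirement into a single polynomial non-vanishing condition in the transformation parameters $e_1,\dots,e_\eta$ and then invoke the Schwartz-Zippel Lemma. For each $k$-subset $S\subseteq\{1,\dots,n\}$, let $\mathbf{G}_S$ be the $tk\times tk$ matrix obtained by stacking the $t$ columns of the generator matrices in \eqref{eq:gene-matrix} indexed by $S$, and set $D_S:=\det(\mathbf{G}_S)\in\mathbb{F}_q[e_1,\dots,e_\eta]$. I would form the product $P:=\prod_{|S|=k}D_S$ and argue two things: (i) every factor $D_S$ is a \emph{nonzero} polynomial, so that $P\not\equiv 0$ because $\mathbb{F}_q[e_1,\dots,e_\eta]$ is an integral domain; and (ii) $\deg P$ is bounded by the quantity in \eqref{field-size2}. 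Granting (i) and (ii), the Schwartz-Zippel Lemma with $\mathbb{S}=\mathbb{F}_q$ shows that whenever $q$ exceeds this degree there is an assignment of $e_1,\dots,e_\eta$ making $P\neq 0$, i.e.\ making all $D_S\neq 0$ simultaneously, which is precisely the MDS property.

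For the degree bound (ii), I would first observe that $e_j$ occurs only in the columns belonging to group $j$ (nodes $(j-1)t+1,\dots,jt$). Reading off \eqref{eq:trans-symbol}, node $(j-1)t+i$ carries $e_j$ in its $t-i$ symbols whose upper index exceeds $i$, so the columns of group $j$ contain $e_j$ in exactly $\sum_{i=1}^{t}(t-i)=\frac{(t-1)t}{2}$ places; since a determinant uses at most one entry per column, $\deg_{e_j}D_S\le\frac{(t-1)t}{2}$ and the total degree of each $D_S$ is at most $\eta\frac{(t-1)t}{2}$. The binomial term is explained by the sharper fact that $D_S$ is a nonzero \emph{constant} (total degree $0$) whenever $S$ is the union of $\ell$ complete groups together with $k-\ell t$ of the $n-\eta t$ untransformed nodes: in a complete group the $t^2$ stored symbols are an invertible combination of the $t^2$ coded symbols $\{c^\ell_{(j-1)t+i}\}$, so by Lemma~\ref{lm:trans} one recovers clean coded symbols at $k$ positions in every layer and $D_S$ collapses to a product of base-code minors independent of the $e_j$. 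There are $\sum_{\ell=0}^{\eta}\binom{n-\eta t}{k-\ell t}\binom{\eta}{\ell}$ such subsets, so summing $\deg D_S$ over all $S$ yields $\deg P\le \eta\frac{(t-1)t}{2}\bigl(\binom{n}{k}-\sum_{\ell=0}^{\eta}\binom{n-\eta t}{k-\ell t}\binom{\eta}{\ell}\bigr)$, matching \eqref{field-size2}.

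The main obstacle is claim (i): that $D_S\not\equiv 0$ for \emph{every} $k$-subset $S$, including those in which some groups are only partially selected. I cannot simply put all $e_j=0$, since a bad partial selection can make the matrix singular at special values; instead I would exhibit, for each $S$, one admissible specialization of $(e_1,\dots,e_\eta)$ under which $\mathbf{G}_S$ is invertible. The mechanism I expect to work is layer-peeling driven by the clean diagonal symbols: node $(j-1)t+i$ always stores the untouched $c^i_{(j-1)t+i}$, and within each partially selected group the pairing in \eqref{eq:trans-symbol} couples $(c^\ell_{(j-1)t+i},c^i_{(j-1)t+\ell})$ in a triangular fashion, so for generic $e_j$ one can invert group by group and then apply the MDS property of the base code $\mathbf{G}_{k\times n}$ layer by layer, exactly as in the repair computations following Theorem~\ref{thm:opt-repair} and in the example of Table~\ref{table:A1}. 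Equivalently, one may single out a monomial in the $e_j$ whose coefficient is a nonzero minor of the untransformed block-diagonal generator; proving that this leading coefficient never vanishes is the technical heart of the argument and the step I expect to demand the most care. Once (i) is secured, combining it with the degree bound (ii) and the Schwartz-Zippel Lemma completes the proof.
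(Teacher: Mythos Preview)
Your strategy—bound the total degree of the product of the relevant $D_S$ and apply Schwartz--Zippel—is exactly the paper's. The point where your argument slips is the handling of the ``nice'' subsets (those made of $\ell$ full groups plus $k-\ell t$ untransformed nodes). You claim $D_S$ is a nonzero \emph{constant} for such $S$; it is not. The $t^2\times t^2$ change of basis from the coded symbols $\{c^\ell_{(j-1)t+i}\}$ of group $j$ to its stored symbols has determinant that genuinely depends on $e_j$ (already for $t=2$ it equals $e_j-1$), so $D_S$ carries nontrivial $e_j$-factors and has positive total degree. The paper sidesteps this with a different mechanism: it proves an internal lemma (Lemma~\ref{lm-group}) showing that for every $e_j\notin\{0,1\}$ the group-local map is invertible, so from any ``nice'' $S$ one can always recover all $tk$ data symbols. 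Hence these $D_S$ are nonzero at every admissible point and may simply be \emph{dropped} from the product before Schwartz--Zippel is applied; the remaining factors number $\binom{n}{k}-\sum_{\ell}\binom{n-\eta t}{k-\ell t}\binom{\eta}{\ell}$, each of total degree at most $\eta\tfrac{(t-1)t}{2}$, yielding \eqref{field-size2}. Same count, different justification.

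On your claim (i)—that every remaining $D_S$ is a nonzero polynomial—you are right that this is required for Schwartz--Zippel, and your peeling/leading-monomial idea is a reasonable line of attack. The paper, however, does not carry this step out: it tacitly assumes the product of the remaining determinants is a nonzero polynomial and invokes Schwartz--Zippel directly. So the ``main obstacle'' you single out is in fact glossed over in the paper's own proof rather than resolved there.
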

\begin{proof}
We view each entry of the matrix $\mathbf{G}_{tk\times tn}$ as a constant and $e_1,e_2,\ldots,e_\eta$ as variables.
We need to evaluate $\dbinom{n}{k}$ determinants that correspond to the determinants of the matrices composed of any $k$ out of $n$ matrices in \eqref{eq:gene-matrix} to be non-zero element in $\mathbb{F}_q$.
We divide the first $\eta t$ nodes into $\eta$ groups, where group $j$
contains $t$ nodes that are from node $(j-1)t+1$ to node $jt$ with
$j=1,2,\ldots,\eta$. We first prove the following simple lemma.
\begin{lemma}
For $j=1,2,\ldots,\eta$, we can compute $t^2$ coded symbols
\[
c^\ell_{(j-1)t+1},c^\ell_{(j-1)t+2},\cdots,c^\ell_{jt},
\]
for $\ell=1,2,\ldots,t$, from $t^2$ symbols
stored in $t$ nodes of group $j$.
\label{lm-group}
\end{lemma}
\begin{proof}
Let us consider the following $t$ nodes of group $j$:
\begin{align*}
\begin{bmatrix}
\text{Node } (j-1)t+1  & \text{Node } (j-1)t+2  & \cdots  & \text{Node } j t \\
c^1_{(j-1)t+1} &  c^1_{(j-1)t+2}+c^2_{(j-1)t+1} & \cdots & c^1_{j t}+c^t_{(j-1)t+1} \\
c^2_{(j-1)t+1}+e_{j}c^1_{(j-1)t+2} &  c^2_{(j-1)t+2} & \cdots &  c^2_{j t}+c^t_{(j-1)t+2} \\
c^3_{(j-1)t+1}+e_jc^1_{(j-1)t+3} & c^3_{(j-1)t+2}+e_{j}c^2_{(j-1)t+3} & \cdots & c^3_{j t}+c^t_{(j-1)t+3} \\
\vdots &  \vdots &\ddots& \vdots \\
c^t_{(j-1)t+1}+e_{j}c^1_{j t} & c^t_{(j-1)t+2}+e_j c^2_{j t} & \cdots & c^t_{j t} \\
\end{bmatrix}.
\end{align*}
For $i=1,2,\ldots,t-1$ and $\ell=i+1,i+2,\ldots,t$, the symbol in row $\ell$ stored in node $(j-1)t+i$ is $c^\ell_{(j-1)t+i}+e_{j}c^i_{(j-1)t+\ell}$ and the symbol in row $i$ stored in node $(j-1)t+\ell$ is $c^i_{(j-1)t+\ell}+c^\ell_{(j-1)t+i}$. By Lemma \ref{lm:trans}, we can compute two coded symbols $c^i_{(j-1)t+\ell}$ and $c^\ell_{(j-1)t+i}$ from $c^\ell_{(j-1)t+i}+e_{j}c^i_{(j-1)t+\ell}$ and $(j-1)t+\ell$ is $c^i_{(j-1)t+\ell}+c^\ell_{(j-1)t+i}$. Recall that the symbol in row $\ell$ stored in node $(j-1)t+\ell$ is the coded symbol $c^\ell_{(j-1)t+\ell}$. Therefore, we can compute the $t^2$ coded symbols from $t^2$ symbols stored in $t$ nodes of group $j$.
\end{proof}
If the $k$ nodes are composed of $k-\ell t$ nodes from the last $n-\eta t$ nodes and $\ell t$ nodes from $\ell$ out of $\eta$ groups, then we can obtain $tk$ coded symbols by Lemma \ref{lm-group} and further compute $tk$ data symbols, where $\ell\leq \eta$. Therefore, we only need to evaluate
\[
\dbinom{n}{k}-\sum_{\ell=0}^{\eta}\dbinom{n-\eta t}{k-\ell t}\cdot \dbinom{\eta}{\ell}
\]
determinants to be non-zero in $\mathbb{F}_q$ or not. Note that there are $\eta$ variables $e_1,e_2,\ldots,e_\eta$ and each of the $\eta$ variables  appears in $(t-1)t/2$ entries of the generator matrix $\mathbf{G}_{tk\times tn}$. Therefore,
each determinant is a polynomial with  at most degree $\eta (t-1)t/2$. The multiplication of all the determinants can
be interpreted as a polynomial with total degree
\begin{align*}
\eta \frac{(t-1)t}{2}\Bigg(\dbinom{n}{k}-\sum_{\ell=0}^{\eta}\dbinom{n-\eta t}{k-\ell t}\cdot \dbinom{\eta}{\ell}
\Bigg).
\end{align*}
Therefore, the MDS property condition is satisfied if the
field size is larger than~\eqref{field-size2} according to the Schwartz-Zippel Lemma.
\end{proof}

Table \ref{table:A2} shows the underlying field size such that there exists at
least one assignment of $e_1,e_2,\ldots,e_\eta$ with the transformed codes
satisfying the MDS property.

\begin{table*}
\caption{The values in \eqref{field-size2} with
some specific parameters.}
\begin{center}
\begin{tabular}{|c|c|c|c|}
\hline
Parameters $(n,k,\eta,t)$ & (8,5,2,2)  & (9,6,2,2)  & (14,10,2,2) \\
\hline
Values in \eqref{field-size2}& 92 & 128& 1400 \\
\hline
\end{tabular}
\end{center}
\label{table:A2}
\end{table*}

According to Theorem \ref{thm:opt-repair}, the transformed codes have optimal repair access for each of the first $\eta t$ nodes.
Similarly, we can also apply the transformation for nodes from $\eta t+1$ to $2\eta t$ such that the repair access of each of nodes from $\eta t+1$ to $2\eta t$ is optimal.

\subsection{Systematic Codes}
Note that the transformed code $\mathcal{C}_1(n,k,\eta,t)$ given in Section \ref{sec:transformation} is not a systematic code. It is important to obtain the systematic transformed code in practice. We can obtain a systematic code by replacing $c^{\ell}_{(j-1)t+i}+c^{i}_{(j-1)t+\ell}$ with $\ell<i$ by $\bar{c}^{\ell}_{(j-1)t+i}$ and replacing $c^{\ell}_{(j-1)t+i}+e_jc^{i}_{(j-1)t+\ell}$ with $\ell>i$ by $\bar{c}^{\ell}_{(j-1)t+i}$, i.e.,
\begin{equation}
\begin{array}{ll}
& \bar{c}^1_{(j-1)t+i}=c^1_{(j-1)t+i}+c^i_{(j-1)t+1},\\
& \bar{c}^2_{(j-1)t+i}=c^2_{(j-1)t+i}+c^i_{(j-1)t+2},\ldots,\\
& \bar{c}^{i-1}_{(j-1)t+i}=c^{i-1}_{(j-1)t+i}+c^i_{(j-1)t+i-1},\\
& \bar{c}^{i+1}_{(j-1)t+i}=c^{i+1}_{(j-1)t+i}+e_jc^i_{(j-1)t+i+1},\\
& \bar{c}^{i+2}_{(j-1)t+i}=c^{i+2}_{(j-1)t+i}+e_jc^i_{(j-1)t+i+2},\ldots,\\
& \bar{c}^{t}_{(j-1)t+i}=c^{t}_{(j-1)t+i}+e_jc^i_{(j-1)t+t},
\end{array}
\label{eq:sys-symbol}
\end{equation}
where $i=1,2,\ldots,t$ and $j=1,2,\ldots,\eta$. For $\ell<i$, we have
\begin{align*}
\bar{c}^{\ell}_{(j-1)t+i}=&c^{\ell}_{(j-1)t+i}+c^{i}_{(j-1)t+\ell},\\
\bar{c}^{i}_{(j-1)t+\ell}=&c^{i}_{(j-1)t+\ell}+e_jc^{\ell}_{(j-1)t+i},
\end{align*}
and further obtain
\begin{align*}
c^{\ell}_{(j-1)t+i}=&\frac{\bar{c}^{i}_{(j-1)t+\ell}-\bar{c}^{\ell}_{(j-1)t+i}}{e_j-1},\\
c^{i}_{(j-1)t+\ell}=&\frac{e_j\bar{c}^{\ell}_{(j-1)t+i}-\bar{c}^{i}_{(j-1)t+\ell}}{e_j-1}.
\end{align*}
In the above equation, $\frac{1}{e_j-1}$ is the inverse of $e_j-1$ over the finite field.
Recall that $c^\ell_h=s^\ell_h$ for $\ell=1,2,\ldots,t$ and $h=1,2,\ldots,k$, and
\[
c^\ell_h=p_{1,h-k}c^\ell_1+p_{2,h-k}c^\ell_2+\ldots+p_{k,h-k}c^\ell_k,
\]
for $h=k+1,k+2,\ldots,n$ and $\ell=1,2,\ldots,t$. We can thus obtain that
\begin{align*}
c^\ell_h=&\Big(\sum_{i=1}^{\ell-1}p_{i,h-k}\frac{e_1\bar{c}^{i}_{\ell}-\bar{c}^{\ell}_{i}}{e_1-1}\Big)+p_{\ell,h-k}c^\ell_\ell+
\Big(\sum_{i=\ell+1}^{t}p_{i,h-k}\frac{\bar{c}^{i}_{\ell}-\bar{c}^{\ell}_{i}}{e_1-1}\Big)+\cdots+\\
&\Big(\sum_{i=1}^{\ell-1}p_{(\eta-1)t+i,h-k}\frac{e_\eta\bar{c}^{i}_{(\eta-1)t+\ell}-\bar{c}^{\ell}_{(\eta-1)t+i}}{e_\eta-1}\Big)+p_{(\eta-1)t+\ell,h-k}c^\ell_{(\eta-1)t+\ell}+\\
&\Big(\sum_{i=\ell+1}^{t}p_{(\eta-1)t+i,h-k}\frac{\bar{c}^{i}_{(\eta-1)t+\ell}-\bar{c}^{\ell}_{(\eta-1)t+i}}{e_\eta-1}\Big)+
p_{\eta t+1,h-k}c^{\ell}_{\eta t+1}+\ldots+p_{k,h-k}c^\ell_k.
\end{align*}

\begin{table*}[!t]
\caption{The systematic transformed codes with $n=11$,
$k=6$, $r=5$, $d=8$ and $\eta=2$.}
\scriptsize
\begin{center}
\begin{tabular}{|c|c|c|c|c|c|c|}
\hline
Node 1 & Node 2  & Node 3  & Node 4 & Node 5  & Node 6  & Node 7  \\
\hline
$c^1_{1}$& $c^1_{2}$ & $c^1_{3}$& $c^1_{4}$ & $c^1_{5}$ & $c^1_{6}$ & $p_{1,1}c^1_1+p_{2,1}\frac{c^2_1-c^1_2}{e_1-1}+p_{3,1}\frac{c^3_1-c^1_3}{e_1-1}+p_{4,1}c^1_4+p_{5,1}\frac{c^2_4-c^1_5}{e_2-1}+p_{6,1}\frac{c^3_4-c^1_6}{e_2-1}$  \\
\hline
$c^2_{1}$& $c^2_{2}$ &$c^2_{3}$& $c^2_{4}$ & $c^2_{5}$ & $c^2_{6}$ & $p_{1,1}\frac{e_1c^1_2-c^2_1}{e_1-1}+p_{2,1}c^2_2+p_{3,1}\frac{c^3_2-c^2_3}{e_1-1}+p_{4,1}\frac{e_2c^1_5-c^2_4}{e_2-1}+p_{5,1}c^2_5+p_{6,1}\frac{c^3_5-c^2_6}{e_2-1}$  \\
\hline
$c^3_{1}$& $c^3_2$ &$c^3_{3}$& $c^3_{4}$ & $c^3_{5}$ & $c^3_{6}$ & $p_{1,1}\frac{e_1c^1_3-c^3_1}{e_1-1}+p_{2,1}\frac{e_1c^2_3-c^3_2}{e_1-1}+p_{3,1}c^3_3+p_{4,1}\frac{e_2c^1_6-c^3_4}{e_2-1}+p_{5,1}\frac{e_2c^2_6-c^3_5}{e_2-1}+p_{6,1}c^3_6$  \\
\hline
\end{tabular}
\begin{tabular}{|c|c|}
\hline
Node 8 & Node 9   \\
\hline
 $p_{1,2}c^1_1+\sum_{i=2}^{3}p_{i,2}\frac{c^{i}_1-c^1_{i}}{e_1-1}+p_{4,2}c^1_4+\sum_{i=2}^{3}p_{3+i,2}\frac{c^i_4-c^1_{3+i}}{e_2-1}$ & $p_{1,3}c^1_1+\sum_{i=2}^{3}p_{i,3}\frac{c^{i}_1-c^1_{i}}{e_1-1}+p_{4,3}c^1_4+\sum_{i=2}^{3}p_{3+i,3}\frac{c^i_4-c^1_{3+i}}{e_2-1}$  \\
\hline
$p_{1,2}\frac{e_1c^1_2-c^2_1}{e_1-1}+p_{2,2}c^2_2+p_{4,2}\frac{e_2c^1_6-c^3_4}{e_2-1}+\sum_{i=1}^{2}p_{3i,2}\frac{c^{3}_{3i-1}-c^2_{3i}}{e_i-1}+p_{5,2}c^2_5$ &$p_{1,3}\frac{e_1c^1_2-c^2_1}{e_1-1}+p_{2,3}c^2_2+p_{4,3}\frac{e_2c^1_6-c^3_4}{e_2-1}+\sum_{i=1}^{2}p_{3i,3}\frac{c^{3}_{3i-1}-c^2_{3i}}{e_i-1}+p_{5,3}c^2_5$  \\
\hline
$\sum_{i=1}^{2}p_{i,2}\frac{e_1c^i_3-c^3_i}{e_1-1}+p_{3,2}c^3_3+\sum_{i=1}^{2}p_{3+i,2}\frac{e_2c^i_6-c^3_{3+i}}{e_2-1}+p_{6,2}c^3_6$ & $\sum_{i=1}^{2}p_{i,3}\frac{e_1c^i_3-c^3_i}{e_1-1}+p_{3,3}c^3_3+\sum_{i=1}^{2}p_{3+i,3}\frac{e_2c^i_6-c^3_{3+i}}{e_2-1}+p_{6,3}c^3_6$ \\
\hline
\end{tabular}
\begin{tabular}{|c|c|}
\hline
Node 10 & Node 11   \\
\hline
 $p_{1,4}c^1_1+\sum_{i=2}^{3}p_{i,4}\frac{c^{i}_1-c^1_{i}}{e_1-1}+p_{4,4}c^1_4+\sum_{i=2}^{3}p_{3+i,4}\frac{c^i_4-c^1_{3+i}}{e_2-1}$ & $p_{1,5}c^1_1+\sum_{i=2}^{3}p_{i,5}\frac{c^{i}_1-c^1_{i}}{e_1-1}+p_{4,5}c^1_4+\sum_{i=2}^{3}p_{3+i,5}\frac{c^i_4-c^1_{3+i}}{e_2-1}$  \\
\hline
$p_{1,4}\frac{e_1c^1_2-c^2_1}{e_1-1}+p_{2,4}c^2_2+p_{4,4}\frac{e_2c^1_6-c^3_4}{e_2-1}+\sum_{i=1}^{2}p_{3i,4}\frac{c^{3}_{3i-1}-c^2_{3i}}{e_i-1}+p_{5,4}c^2_5$ &$p_{1,5}\frac{e_1c^1_2-c^2_1}{e_1-1}+p_{2,5}c^2_2+p_{4,5}\frac{e_2c^1_6-c^3_4}{e_2-1}+\sum_{i=1}^{2}p_{3i,5}\frac{c^{3}_{3i-1}-c^2_{3i}}{e_i-1}+p_{5,5}c^2_5$  \\
\hline
$\sum_{i=1}^{2}p_{i,4}\frac{e_1c^i_3-c^3_i}{e_1-1}+p_{3,4}c^3_3+\sum_{i=1}^{2}p_{3+i,4}\frac{e_2c^i_6-c^3_{3+i}}{e_2-1}+p_{6,4}c^3_6$ & $\sum_{i=1}^{2}p_{i,5}\frac{e_1c^i_3-c^3_i}{e_1-1}+p_{3,5}c^3_3+\sum_{i=1}^{2}p_{3+i,5}\frac{e_2c^i_6-c^3_{3+i}}{e_2-1}+p_{6,5}c^3_6$ \\
\hline
\end{tabular}
\end{center}
\label{table:sys-trans}
\end{table*}

Table \ref{table:sys-trans} shows the systematic transformed code with $n=11$,
$k=6$, $r=5$, $d=8$ and $\eta=2$.

The systematic transformed codes also satisfy Theorem \ref{thm:opt-repair} and Theorem \ref{thm:hybridcons}, as the two transformed codes are equivalent. According to Theorem \ref{thm:opt-repair}, each of the first six nodes in Table \ref{table:sys-trans} has optimal repair access. For example, we can recover the three symbols in node 1 by accessing the following eight symbols
\[
c^1_2,c^1_3,c^1_4,c^1_7,c^1_8,c^1_9,c^1_{10},c^1_{11}.
\]
Specifically, we can first subtract $c^1_4$ from $c^1_7$, $c^1_8$, $c^1_9$, $c^1_{10}$ and $c^1_{11}$, respectively, to obtain
\begin{align*}
&\begin{bmatrix}
p_{1,1}c^1_1+p_{2,1}\frac{c^2_1-c^1_2}{e_1-1}+p_{3,1}\frac{c^3_1-c^1_3}{e_1-1}+p_{5,1}\frac{c^2_4-c^1_5}{e_2-1}+p_{6,1}\frac{c^3_4-c^1_6}{e_2-1}\\
p_{1,2}c^1_1+p_{2,2}\frac{c^2_1-c^1_2}{e_1-1}+p_{3,2}\frac{c^3_1-c^1_3}{e_1-1}+p_{5,2}\frac{c^2_4-c^1_5}{e_2-1}+p_{6,2}\frac{c^3_4-c^1_6}{e_2-1}\\
p_{1,3}c^1_1+p_{2,3}\frac{c^2_1-c^1_2}{e_1-1}+p_{3,3}\frac{c^3_1-c^1_3}{e_1-1}+p_{5,3}\frac{c^2_4-c^1_5}{e_2-1}+p_{6,3}\frac{c^3_4-c^1_6}{e_2-1}\\
p_{1,4}c^1_1+p_{2,4}\frac{c^2_1-c^1_2}{e_1-1}+p_{3,4}\frac{c^3_1-c^1_3}{e_1-1}+p_{5,4}\frac{c^2_4-c^1_5}{e_2-1}+p_{6,4}\frac{c^3_4-c^1_6}{e_2-1}\\
p_{1,5}c^1_1+p_{2,5}\frac{c^2_1-c^1_2}{e_1-1}+p_{3,5}\frac{c^3_1-c^1_3}{e_1-1}+p_{5,5}\frac{c^2_4-c^1_5}{e_2-1}+p_{6,5}\frac{c^3_4-c^1_6}{e_2-1}\\
\end{bmatrix}\\
=&\begin{bmatrix}
c^1_1 &\frac{c^2_1-c^1_2}{e_1-1} &\frac{c^3_1-c^1_3}{e_1-1} &\frac{c^2_4-c^1_5}{e_2-1} &\frac{c^3_4-c^1_6}{e_2-1}
\end{bmatrix}\cdot
\begin{bmatrix}
p_{1,1} &p_{1,2} &p_{1,3} &p_{1,4} &p_{1,5}\\
p_{2,1} &p_{2,2} &p_{2,3} &p_{2,4} &p_{2,5}\\
p_{3,1} &p_{3,2} &p_{3,3} &p_{3,4} &p_{3,5}\\
p_{5,1} &p_{5,2} &p_{5,3} &p_{5,4} &p_{5,5}\\
p_{6,1} &p_{6,2} &p_{6,3} &p_{6,4} &p_{6,5}\\
\end{bmatrix}.
\end{align*}
As the square matrix in the above equation is invertible, we can compute the five symbols
\[
c^1_1,\frac{c^2_1-c^1_2}{e_1-1},\frac{c^3_1-c^1_3}{e_1-1},\frac{c^2_4-c^1_5}{e_2-1},\frac{c^3_4-c^1_6}{e_2-1},
\]
from the above five symbols, and further obtain
\[
c^1_1,c^2_1-c^1_2,c^3_1-c^1_3,c^2_4-c^1_5,c^3_4-c^1_6,
\]
as $e_1-1$ and $e_2-1$ are non-zero in the finite field. We have recovered $c^1_1$. Together with $c^1_2$ and $c^1_3$, we can recover $c^2_1$ and $c^3_1$ by
\begin{align*}
c^2_1=&(c^2_1-c^1_2)+c^1_2,\\
c^3_1=&(c^3_1-c^1_3)+c^1_3.
\end{align*}
Similarly, we can recover the symbols stored in node 2, node 3, node 4, node 5 and node 6 by accessing
\[
c^2_1,c^2_3,c^2_5,c^2_7,c^2_8,c^2_9,c^2_{10},c^2_{11},
\]
\[
c^3_1,c^3_2,c^3_6,c^3_7,c^3_8,c^3_9,c^3_{10},c^3_{11},
\]
\[
c^1_1,c^1_5,c^1_6,c^1_7,c^1_8,c^1_9,c^1_{10},c^1_{11},
\]
\[
c^2_1,c^2_4,c^2_6,c^2_7,c^2_8,c^2_9,c^2_{10},c^2_{11},
\]
and
\[
c^3_3,c^3_4,c^3_5,c^3_7,c^3_8,c^3_9,c^3_{10},c^3_{11}.
\]
respectively.

\section{Multi-Layer Transformed MDS Codes with Optimal Repair Access}
\label{sec:const}

In this section, we present the construction of multi-layer transformed MDS codes by recursively applying the transformation given in Section \ref{sec:trans}.

\subsection{Construction}

We divide $n$ nodes into $\lceil \frac{n}{\eta t}\rceil$ sets, each of which
contains $\eta t$ nodes.  For $i=1,2,\ldots,\lceil \frac{n}{\eta t}\rceil-1$,
set $i$ contains nodes between $(i-1)\eta t+1$ and $i\eta t$, while set
$\lceil \frac{n}{\eta t}\rceil$ contains the last $\eta t$ nodes. We further
divide each set into $\eta$ groups, each of which contains $t$ nodes.

If we apply the transformation in Section \ref{sec:trans} for the first set of an $(n,k)$ MDS code, we can obtain a transformed code $\mathcal{C}_1(n,k,\eta,t)$ with each node having $t$ symbols such that, according to Theorem \ref{thm:hybridcons} and Theorem \ref{thm:opt-repair}, $\mathcal{C}_1(n,k,\eta,t)$ is an MDS code and has optimal repair access for each of the first $\eta t$ nodes. If we
apply the transformation in Section \ref{sec:trans} for the second set of the code $\mathcal{C}_1(n,k,\eta,t)$, we can obtain the code $\mathcal{C}_2(n,k,\eta,t)$ with each node having $\alpha=t^2$ symbols.
Specifically, we can obtain $\mathcal{C}_2(n,k,\eta,t)$ as follows. We first generate $t$ instances of the code $\mathcal{C}_1(n,k,\eta,t)$ and view the $t$ symbols stored in each node of $\mathcal{C}_1(n,k,\eta,t)$ as a vector. For $\ell=1,2,\ldots,t$ and $h=1,2,\ldots,n$, the vector stored in node $h$ of instance $\ell$ of $\mathcal{C}_1(n,k,\eta,t)$ is denoted as $\mathbf{v}^\ell_h$. For $i=1,2,\ldots,t$ and $j=1,2,\ldots,\eta$, node $t\eta+(j-1)t+i$ of $\mathcal{C}_2(n,k,\eta,t)$ stores the following $t$ vectors ($t^2$ symbols)
\begin{equation}
\begin{array}{ll}
& \mathbf{v}^1_{t\eta+(j-1)t+i}+\mathbf{v}^i_{t\eta+(j-1)t+1},\\
& \mathbf{v}^2_{t\eta+(j-1)t+i}+\mathbf{v}^i_{t\eta+(j-1)t+2},\ldots,\\
& \mathbf{v}^{i-1}_{t\eta+(j-1)t+i}+\mathbf{v}^i_{t\eta+(j-1)t+i-1},\\
& \mathbf{v}^i_{t\eta+(j-1)t+i},\\
& \mathbf{v}^{i+1}_{t\eta+(j-1)t+i}+e_{\eta+j}\mathbf{v}^i_{t\eta+(j-1)t+i+1},\\
& \mathbf{v}^{i+2}_{t\eta+(j-1)t+i}+e_{\eta+j}\mathbf{v}^i_{t\eta+(j-1)t+i+2},\ldots,\\
& \mathbf{v}^{t}_{t\eta+(j-1)t+i}+e_{\eta+j}\mathbf{v}^i_{t\eta+(j-1)t+t},
\end{array}
\label{eq:trans-vectors}
\end{equation}
where $e_{\eta+j}$ is an element from the finite field except zero and one. Note that the multiplication of $e_{\eta+j}$ and a vector
\[
\mathbf{v}=\begin{bmatrix}
v_1 & v_2 &\ldots &v_t
\end{bmatrix}
\]
is defined as
\[
e_{\eta+j}\mathbf{v}=\begin{bmatrix}
e_{\eta+j}v_1 & e_{\eta+j}v_2 &\ldots &e_{\eta+j}v_t
\end{bmatrix}
\]
and the addition of two vectors
\[
\mathbf{v}^1=\begin{bmatrix}
v^1_1 & v^1_2 &\ldots &v^1_t
\end{bmatrix}
\]
and
\[
\mathbf{v}^2=\begin{bmatrix}
v^2_1 & v^2_2 &\ldots &v^2_t
\end{bmatrix}
\]
is
\[
\mathbf{v}^1+\mathbf{v}^2=\begin{bmatrix}
v^1_1+v^2_1 & v^1_2+v^2_2 &\ldots &v^1_t+v^2_t
\end{bmatrix}.
\]
For $h\in \{1,2,\ldots,n\}\setminus \{t\cdot \eta+1, t\cdot \eta+2,\ldots,2t\cdot \eta\}$, node $h$ stores $t$ vectors ($t^2$ symbols)
\[
\mathbf{v}^1_{h},\mathbf{v}^2_{h},\ldots,\mathbf{v}^{t}_{h}.
\]
According to Theorem \ref{thm:hybridcons},  $\mathcal{C}_2(n,k,\eta,t)$ is an MDS code and, according to Theorem \ref{thm:opt-repair}, has optimal repair access for each of the second $\eta t$ nodes.

Consider the example of the code $\mathcal{C}_1(n=8,k=5,\eta=2,t=2)$ in Section \ref{sec:example}. We can obtain $\mathcal{C}_2(n=8,k=5,\eta=2,t=2)$ by applying the transformation in Section \ref{sec:trans} for the last four nodes (the second set) of the code $\mathcal{C}_1(n=8,k=5,\eta=2,t=2)$. We should first generate $t=2$ instances of $\mathcal{C}_1(n=8,k=5,\eta=2,t=2)$ that are given by \eqref{eq:two-ins} and the vector ($t=2$ symbols) stored in node $h$ of instance $\ell$ of $\mathcal{C}_1(n=8,k=5,\eta=2,t=2)$ is denoted as $\mathbf{v}^\ell_h$, where $\ell=1,2$ and $h=1,2,\ldots,8$.
We thus have
\begin{align*}
&\mathbf{v}^{\ell}_1=\begin{bmatrix}
c^{2(\ell-1)+1}_1, & c^{2(\ell-1)+2}_1+e_1c^{2(\ell-1)+1}_2
\end{bmatrix},\\
&\mathbf{v}^{\ell}_2=\begin{bmatrix}
c^{2(\ell-1)+1}_2+c^{2(\ell-1)+2}_1, & c^{2(\ell-1)+2}_2
\end{bmatrix},\\
&\mathbf{v}^{\ell}_3=\begin{bmatrix}
c^{2(\ell-1)+1}_3, & c^{2(\ell-1)+2}_3+e_2c^{2(\ell-1)+1}_4
\end{bmatrix},\\
&\mathbf{v}^{\ell}_4=\begin{bmatrix}
c^{2(\ell-1)+1}_4+c^{2(\ell-1)+2}_3, & c^{2(\ell-1)+2}_4
\end{bmatrix},\\
&\mathbf{v}^{\ell}_5=\begin{bmatrix}
c^{2(\ell-1)+1}_5, & c^{2(\ell-1)+2}_5
\end{bmatrix},\\
&\mathbf{v}^{\ell}_6=\begin{bmatrix}
c^{2(\ell-1)+1}_6, & c^{2(\ell-1)+2}_6
\end{bmatrix},\\
&\mathbf{v}^{\ell}_7=\begin{bmatrix}
c^{2(\ell-1)+1}_7, & c^{2(\ell-1)+2}_7
\end{bmatrix},\\
&\mathbf{v}^{\ell}_8=\begin{bmatrix}
c^{2(\ell-1)+1}_8, & c^{2(\ell-1)+2}_8
\end{bmatrix},
\end{align*}
where $\ell=1,2$. According to \eqref{eq:trans-vectors}, the storage of the last four nodes is
\begin{align*}
\left[\begin{array}{c|cc}
\text{Node } 5 &\mathbf{v}^1_5=[c^1_5, c^2_5] & \mathbf{v}^2_5+e_3\mathbf{v}^1_6=[c^3_5+e_3c^1_6, c^4_5+e_3c^2_6]\\ \hline
\text{Node } 6 &\mathbf{v}^1_6+\mathbf{v}^2_5=[c^1_6+c^3_5, c^2_6+c^4_5] & \mathbf{v}^2_6=[c^3_6, c^4_6] \\ \hline
\text{Node } 7 &\mathbf{v}^1_7=[c^1_7, c^2_7] & \mathbf{v}^2_7+e_4\mathbf{v}^1_8=[c^3_7+e_4c^1_8, c^4_7+e_4c^2_8] \\\hline
\text{Node } 8 & \mathbf{v}^1_8+\mathbf{v}^2_7=[c^1_8+c^3_7, c^2_8+c^4_7]& \mathbf{v}^2_8=[c^3_8, c^4_8] \\\hline
\end{array}\right],
\end{align*}
which is the same as the storage of the last four nodes in Table \ref{table:A3}.

\begin{table*}[!t]
\caption{The storage of the first $\eta t$ nodes of $\mathcal{C}_1(n,k,\eta,t)$.}
\begin{center}
\begin{tabular}{|c|c|c|c|c|c|c|c|c|}
\hline
Node 1 & Node 2  & $\cdots$  & Node $t$ & $\cdots$  & Node $(\eta-1)t+1$  & Node $(\eta-1)t+2$  & $\cdots$  & Node $\eta t$ \\
\hline
$c^1_{1}$& $c^1_{2}+c^2_{1}$ & $\cdots$& $c^1_{t}+c^t_{1}$ & $\cdots$ & $c^1_{(\eta-1)t+1}$& $c^1_{(\eta-1)t+2}+c^2_{(\eta-1)t+1}$ & $\cdots$& $c^1_{\eta t}+c^t_{(\eta-1)t+1}$ \\
\hline
$c^2_{1}+e_1c^1_2$& $c^2_{2}$ &$\cdots$& $c^2_{t}+c^t_2$ & $\cdots$ & $c^2_{(\eta-1)t+1}+e_{\eta}c^1_{(\eta-1)t+2}$& $c^2_{(\eta-1)t+2}$ &$\cdots$& $c^2_{\eta t}+c^t_{(\eta-1)t+2}$ \\
\hline
$c^3_{1}+e_1c^1_3$& $c^3_{2}+e_1c^2_{3}$ & $\cdots$& $c^3_{t}+c^t_{3}$ & $\cdots$ & $c^3_{(\eta-1)t+1}+e_\eta c^1_{(\eta-1)t+3}$& $c^3_{(\eta-1)t+2}+e_{\eta}c^2_{(\eta-1)t+3}$ & $\cdots$& $c^3_{\eta t}+c^t_{(\eta-1)t+3}$ \\
\hline
$\vdots$& $\vdots$ &$\ddots$& $\vdots$ & $\cdots$ & $\vdots$& $\vdots$ &$\ddots$& $\vdots$ \\
\hline
$c^t_{1}+e_1c^1_t$& $c^t_{2}+e_1c^2_{t}$ & $\cdots$& $c^t_{t}$ & $\cdots$ & $c^t_{(\eta-1)t+1}+e_{\eta}c^1_{\eta t}$& $c^t_{(\eta-1)t+2}+e_\eta c^2_{\eta t}$ & $\cdots$& $c^t_{\eta t}$ \\
\hline
\end{tabular}
\end{center}
\label{table:thm-proof1}
\end{table*}

Table \ref{table:thm-proof1} shows the $t$ symbols stored in each of the first $\eta t$ nodes of $\mathcal{C}_1(n,k,\eta,t)$. For $i=\eta t+1,\ldots,n$, the $t$ symbols stored in node $i$ of $\mathcal{C}_1(n,k,\eta,t)$ are $c^1_i,c^2_i,\ldots,c^t_i$.
Table \ref{table:thm-proof2} shows the storage of the first $2\eta t$ nodes of $\mathcal{C}_2(n,k,\eta,t)$.
In the next theorem, we show that the repair access of each of the first $\eta t$ nodes is also optimal.

\begin{table*}[!t]
\scriptsize
\caption{The storage of the first $2\eta t$ nodes of $\mathcal{C}_2(n,k,\eta,t)$.}
\begin{center}
\begin{tabular}{|c|c|c|c|c|c|c|}
\hline
Node 1 & $\cdots$  & Node $t$ & $\cdots$  & Node $(\eta-1)t+1$   & $\cdots$  & Node $\eta t$ \\
\hline
$c^1_{1}$& $\cdots$& $c^1_{t}+c^t_{1}$ & $\cdots$ & $c^1_{(\eta-1)t+1}$&  $\cdots$& $c^1_{\eta t}+c^t_{(\eta-1)t+1}$ \\
$c^2_{1}+e_1c^1_2$& $\cdots$& $c^2_{t}+c^t_2$ & $\cdots$ & $c^2_{(\eta-1)t+1}+e_{\eta}c^1_{(\eta-1)t+2}$& $\cdots$& $c^2_{\eta t}+c^t_{(\eta-1)t+2}$ \\
$\vdots$& $\ddots$& $\vdots$ & $\cdots$ & $\vdots$ &$\ddots$& $\vdots$ \\
$c^t_{1}+e_1c^1_t$& $\cdots$& $c^t_{t}$ & $\cdots$ & $c^t_{(\eta-1)t+1}+e_{\eta}c^1_{\eta t}$& $\cdots$& $c^t_{\eta t}$ \\
\hline
$\vdots$& $\vdots$& $\vdots$ & $\vdots$ & $\vdots$& $\vdots$ &$\vdots$ \\
\hline
$c^{(t-1)t+1}_{1}$& $\cdots$& $c^{(t-1)t+1}_{t}+c^{(t-1)t+t}_{1}$ & $\cdots$ & $c^{(t-1)t+1}_{(\eta-1)t+1}$& $\cdots$& $c^{(t-1)t+1}_{\eta t}+c^{t^2}_{(\eta-1)t+1}$ \\
$c^{(t-1)t+2}_{1}+e_1c^{(t-1)t+1}_2$& $\cdots$& $c^{(t-1)t+2}_{t}+c^{(t-1)t+t}_2$ & $\cdots$ & $c^{(t-1)t+2}_{(\eta-1)t+1}+e_{\eta}c^{(t-1)t+1}_{(\eta-1)t+2}$& $\cdots$& $c^{(t-1)t+2}_{\eta t}+c^{t^2}_{(\eta-1)t+2}$ \\
$\vdots$& $\ddots$& $\vdots$ & $\cdots$ & $\vdots$ &$\ddots$& $\vdots$ \\
$c^{(t-1)t+t}_{1}+e_1c^{(t-1)t+1}_t$& $\cdots$& $c^{(t-1)t+t}_{t}$ & $\cdots$ & $c^{(t-1)t+t}_{(\eta-1)t+1}+e_{\eta}c^{(t-1)t+1}_{\eta t}$ & $\cdots$& $c^{t^2}_{\eta t}$ \\
\hline
\end{tabular}
\begin{tabular}{|c|c|c|c|c|c|c|}
\hline
Node $\eta t+it+1$ & Node $\eta t+it+2$ & $\cdots$ & Node $\eta t+it +t$\\
\hline
 $c^{1}_{\eta t+it+1}$ &  $c^{1}_{\eta t+it+2}+c^{t+1}_{\eta t+it+1}$ & $\cdots$ &  $c^{1}_{\eta t+it+t}+c^{(t-1)t+1}_{\eta t+it+1}$\\
 $c^{2}_{\eta t+it+1}$ &   $c^{2}_{\eta t+it+2}+c^{t+2}_{\eta t+it+1}$  & $\cdots$ &  $c^{2}_{\eta t+it+t}+c^{(t-1)t+2}_{\eta t+it+1}$\\
 $\vdots$ & $\vdots$  & $\cdots$&$\vdots$ \\
 $c^{t}_{\eta t+it+1}$ &   $c^{t}_{\eta t+it+2}+c^{2t}_{\eta t+it+1}$ & $\cdots$  &  $c^{t}_{\eta t+it+t}+c^{t^2}_{\eta t+it+1}$\\
\hline
  $c^{t+1}_{\eta t+it+1}+e_{\eta+i+1}c^{1}_{\eta t+it+2}$ & $c^{t+1}_{\eta t+it+2}$ & $\cdots$ & $c^{t+1}_{\eta t+it+t}+c^{(t-1)t+1}_{\eta t+it+2}$\\
  $c^{t+2}_{\eta t+it+1}+e_{\eta+i+1}c^{2}_{\eta t+it+2}$ & $c^{t+2}_{\eta t+it+2}$  & $\cdots$& $c^{t+2}_{\eta t+it+t}+c^{(t-1)t+2}_{\eta t+it+2}$\\
  $\vdots$ & $\vdots$ & $\cdots$ & $\vdots$\\
  $c^{2t}_{\eta t+it+1}+e_{\eta+i+1}c^{t}_{\eta t+it+2}$ & $c^{2t}_{\eta t+it+2}$  & $\cdots$& $c^{2t}_{\eta t+it+t}+c^{t^2}_{\eta t+it+2}$\\
\hline
  $\vdots$ & $\vdots$ & $\cdots$ & $\vdots$\\
\hline
  $c^{(t-1)t+1}_{\eta t+it+1}+e_{\eta+i+1}c^{1}_{\eta t+it+t}$&  $c^{(t-1)t+1}_{\eta t+it+2}+e_{\eta+i+1}c^{t+1}_{\eta t+it+t}$ & $\cdots$& $c^{(t-1)t+1}_{\eta t+it+t}$\\
  $c^{(t-1)t+2}_{\eta t+it+1}+e_{\eta+i+1}c^{2}_{\eta t+it+t}$&  $c^{(t-1)t+2}_{\eta t+it+2}+e_{\eta+i+1}c^{t+2}_{\eta t+it+t}$ & $\cdots$& $c^{(t-1)t+2}_{\eta t+it+t}$\\
  $\vdots$& $\vdots$ & $\cdots$& $\vdots$\\
  $c^{t^2}_{\eta t+it+1}+e_{\eta+i+1}c^{t}_{\eta t+it+t}$&  $c^{t^2}_{\eta t+it+2}+e_{\eta+i+1}c^{2t}_{\eta t+it+t}$ & $\cdots$& $c^{t^2}_{\eta t+it+t}$\\
\hline
\end{tabular}
\end{center}
\label{table:thm-proof2}
\end{table*}

\begin{theorem}
The repair access of node $i$ of $\mathcal{C}_2(n,k,\eta,t)$ for $i=1,2,\ldots, \eta t$ is optimal.
\label{thm:repair-preserve}
\end{theorem}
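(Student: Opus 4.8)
The plan is to reduce the repair of a first-set node of $\mathcal{C}_2(n,k,\eta,t)$ to $t$ parallel instances of the optimal first-set repair of $\mathcal{C}_1(n,k,\eta,t)$ already guaranteed by Theorem \ref{thm:opt-repair}. The starting observation is that the repair in Theorem \ref{thm:opt-repair} of node $(j-1)t+i$ amounts to accessing exactly the $i$-th stored symbol from each of its $d$ helper nodes: the $d-k$ within-group combined symbols in \eqref{eq:d-k-symbol} are precisely the $i$-th entries of the corresponding group nodes, and the $k$ pure symbols $c^i_{h_1},\ldots,c^i_{h_k}$ are the $i$-th entries of the helpers drawn from \eqref{eq:k-set}. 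I would first record this uniform ``access the $i$-th entry everywhere'' description, since it is what lifts cleanly to the vector level.

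Recall that in $\mathcal{C}_2$ every node outside the second set stores the $t$ instance vectors $\mathbf{v}^1_h,\ldots,\mathbf{v}^t_h$ with no inter-instance mixing, whereas the second-set nodes store the combinations \eqref{eq:trans-vectors}. To repair a first-set node $(j-1)t+i$, I would access, from each helper node, the $i$-th component of each of its $t$ stored vectors, i.e. $t$ symbols per helper and hence $dt=d\alpha/(d-k+1)$ symbols in total, matching the optimal access in \eqref{eq:optimal-repair}. For helpers lying in the first set (the $t-1$ group mates) or beyond the second set, these $i$-th components are literally the quantities $(\mathbf{v}^\ell_h)_i$ that instance $\ell$'s $\mathcal{C}_1$-repair needs, so no further work is required there.

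The crux is the second-set helpers, where $(\mathbf{v}^\ell_h)_i$ is entangled across instances and neighbouring group nodes by \eqref{eq:trans-vectors}. The structural point I would exploit is that taking the $i$-th component is a linear projection that commutes with the vector addition and the scalar multiplication by $e_{\eta+j'}$ appearing in \eqref{eq:trans-vectors}; consequently the $i$-th-component slice of the second-set storage is itself an exact copy of the transformation applied to the symbols $(\mathbf{v}^\ell_p)_i$, with the instances playing the role of layers and the $t$ nodes of a second-set group playing the role of the $t$ nodes. I would therefore arrange the helper set to contain the entire second set, so that every second-set group is present in full; then Lemma \ref{lm-group} (equivalently the group-wise use of Lemma \ref{lm:trans}) applied to this slice recovers all the pure components $(\mathbf{v}^\ell_p)_i$ from the $t$ symbols accessed at each such node.

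Two points then close the argument. First, I must check that taking the whole second set as helpers respects the access budget: the $\eta$ second-set groups supply $\eta t$ of the $k$ pure-symbol helpers, and the remaining $k-\eta t$ can be filled from the $\eta-1$ other-group position-$i$ nodes and the nodes beyond the second set, which is feasible exactly because $\eta(t-1)\le r-1$, the same inequality underlying Theorem \ref{thm:opt-repair}; this is where $k\ge\eta t$, and in particular the high-code-rate regime, is used. Second, once every needed component $(\mathbf{v}^\ell_h)_i$ is in hand, I run the $\mathcal{C}_1$ repair of Theorem \ref{thm:opt-repair} independently in each instance $\ell$ to reconstruct $\mathbf{v}^\ell_{(j-1)t+i}$, and thus all $t^2$ symbols of the failed node. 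The main obstacle is precisely the disentangling at the second-set helpers: getting the access pattern and the completeness of the second-set groups right so that the projected transformation is invertible within budget; everything else is a direct transcription of the already-proved $\mathcal{C}_1$ repair.
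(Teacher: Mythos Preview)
Your proposal is correct and follows essentially the same route as the paper's proof: access rows $i,\,i+t,\ldots,i+(t-1)t$ from all $\eta t$ second-set nodes plus $k-\eta t$ further nodes and the $t-1$ group mates, invert the second-set mixing via Lemma~\ref{lm:trans} (your appeal to Lemma~\ref{lm-group} on the $i$-th-component slice is the same computation), and then run the $\mathcal{C}_1$ repair of Theorem~\ref{thm:opt-repair} in each of the $t$ instances. Your feasibility check $\eta(t-1)\le r-1$ is exactly the inequality the paper verifies as $\eta-1+(n-2t\eta)\ge k-\eta t$, and your remark that $k\ge\eta t$ is needed matches the paper's implicit use of that bound.
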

\begin{proof}
For $i=1,2,\ldots,t$ and $j=1,2,\ldots,\eta$, we can repair $t^2$ symbols in node $(j-1)t+i$ by downloading
$\eta t^2$ symbols from nodes $\eta t+1,\eta t+2,\ldots,2\eta t$ in rows $i,i+t,\ldots,i+(t-1)t$, and $(k-\eta t)t$ symbols from
nodes $h_1,\ldots,h_{k-\eta t}$ in rows $i,i+t,\ldots,i+(t-1)t$ with indices $h_1,\ldots,h_k$ in
\begin{equation}
\{h_1,\ldots,h_{k-\eta t}\}\subset
\{i,\ldots,(j-2)t+i,jt+i,\ldots,(\eta-1)t+i,2t\eta+1,\ldots,n\}
\label{eq:k-set1}
\end{equation}
and $(d-k)t$ symbols in \eqref{eq:t(d-k)-symbol}.
\begin{equation}
\begin{array}{ll}
&c^i_{(j-1)t+1}+e_jc^1_{(j-1)t+i},\ldots,c^{i}_{(j-1)t+i-1}+e_jc^{i-1}_{(j-1)t+i},\\
&c^i_{(j-1)t+i+1}+c^{i+1}_{(j-1)t+i},\ldots,c^i_{(j-1)t+t}+c^{t}_{(j-1)t+i},\\
&c^{t+i}_{(j-1)t+1}+e_jc^{t+1}_{(j-1)t+i},\ldots,c^{t+i}_{(j-1)t+i-1}+e_jc^{t+i-1}_{(j-1)t+i},\\
&c^{t+i}_{(j-1)t+i+1}+c^{t+i+1}_{(j-1)t+i},\ldots,c^{t+i}_{(j-1)t+t}+c^{2t}_{(j-1)t+i},\\
& \vdots \\
&c^{(t-1)t+i}_{(j-1)t+1}+e_{j}c^{(t-1)t+1}_{(j-1)t+i},\ldots,c^{(t-1)t+i}_{(j-1)t+i-1}+e_{j}c^{(t-1)t+i-1}_{(j-1)t+i},\\
&c^{(t-1)t+i}_{(j-1)t+i+1}+c^{(t-1)t+i+1}_{(j-1)t+i},\ldots,c^{(t-1)t+i}_{(j-1)t+t}+c^{(t^2}_{(j-1)t+i}.
\label{eq:t(d-k)-symbol}
\end{array}
\end{equation}
Note that
\begin{align*}
\eta-1+(n-2t\eta)=&n-(2t-1)\lfloor \frac{n-k-1}{d-k}\rfloor -1\\
\geq & k-\eta t,
\end{align*}
we can thus choose $k-\eta t$ different values in \eqref{eq:k-set1}.

First, we claim that we can obtain the following $kt$ symbols
\begin{equation}
\begin{array}{ll}
&c^{i}_{\eta t+1},c^{i}_{\eta t+2},\ldots,c^{i}_{2\eta t},c^{i}_{h_1},c^{i}_{h_2},\ldots,c^{i}_{h_{k-\eta t}},\\
&c^{i+t}_{\eta t+1},c^{i+t}_{\eta t+2},\ldots,c^{i+t}_{2\eta t},c^{i+t}_{h_1},c^{i+t}_{h_2},\ldots,c^{i+t}_{h_{k-\eta t}},\\
& \vdots \\
&c^{i+(t-1)t}_{\eta t+1},c^{i+(t-1)t}_{\eta t+2},\ldots,c^{i+(t-1)t}_{2\eta t},c^{i+(t-1)t}_{h_1},c^{i+(t-1)t}_{h_2},\ldots,c^{i+(t-1)t}_{h_{k-\eta t}},
\label{eq:thm-pr}
\end{array}
\end{equation}
from the downloaded $kt$ symbols from nodes $\eta t+1,\eta t+2,\ldots,2\eta t$ and $h_1,\ldots,h_{k-\eta t}$.
As $c^{i}_{h_j},c^{i+t}_{h_j},\ldots,c^{i+(t-1)t}_{h_{j}}$ are directly downloaded from node $h_j$ for $j=1,2,\ldots,k-\eta t$, we only need to show that we can obtain
\begin{equation}
\begin{array}{ll}
&c^{i}_{\eta t+1},c^{i}_{\eta t+2},\ldots,c^{i}_{2\eta t},\\
&c^{i+t}_{\eta t+1},c^{i+t}_{\eta t+2},\ldots,c^{i+t}_{2\eta t},\\
& \vdots \\
&c^{i+(t-1)t}_{\eta t+1},c^{i+(t-1)t}_{\eta t+2},\ldots,c^{i+(t-1)t}_{2\eta t},
\end{array}
\label{eq:thm-pr1}
\end{equation}
from the downloaded $\eta t^2$ symbols from nodes $\eta t+1$ to $2\eta t$.
Recall that the $\eta t^2$ symbols downloaded from nodes $\eta t+1$ to $2\eta t$ are
\begin{align*}
&c^{i}_{\eta t+\ell t+1},c^{i}_{\eta t+\ell t+2}+c^{t+i}_{\eta t+\ell t+1},\ldots,c^{i}_{\eta t+\ell t+t}+c^{(t-1)t+i}_{\eta t+\ell t+1},\\
&c^{t+i}_{\eta t+\ell t+1}+e_{\eta+\ell+1}c^{i}_{\eta t+\ell t+2},c^{t+i}_{\eta t+\ell t+2},\ldots,c^{t+i}_{\eta t+\ell t+t}+c^{(t-1)t+i}_{\eta t+\ell t+2},\\
& \vdots,\\
&c^{(t-1)t+i}_{\eta t+\ell t+1}+e_{\eta+\ell+1}c^{i}_{\eta t+\ell t+t},c^{(t-1)t+i}_{\eta t+\ell t+2}+e_{\eta +\ell+1}c^{t+i}_{\eta t+\ell t+t},\ldots,c^{(t-1)t+i}_{\eta t+\ell t+t},
\end{align*}
where $\ell=0,1,\ldots,\eta -1$.
 By Lemma \ref{lm:trans}, we can compute $c^{i}_{\eta t+\ell t+2}$ and $c^{t+i}_{\eta t+\ell t+1}$ from $c^{i}_{\eta t+\ell t+2}+c^{t+i}_{\eta t+\ell t+1}$ and $c^{t+i}_{\eta t+\ell t+1}+e_{\eta+\ell+1}c^{i}_{\eta t+\ell t+2}$. Similarly,  by Lemma \ref{lm:trans}, we can compute all the symbols in \eqref{eq:thm-pr1} from the above symbols. Therefore, according to the MDS property, we can obtain $kt$ symbols in \eqref{eq:thm-pr} and further compute
$c^{i+\ell t}_{1},c^{i+\ell t}_{2},\ldots,c^{i+\ell t}_{t}$ with $\ell=0,1,\ldots,t-1$. With the computed $c^{i+\ell t}_{1},c^{i+\ell t}_{2},\ldots,c^{i+\ell t}_{t}$ with $\ell=0,1,\ldots,t-1$ and the accessed $(d-k)t$ symbols in~\eqref{eq:t(d-k)-symbol},  by
Lemma \ref{lm:trans}, we can compute all $t$ symbols stored in node $(j-1)t+i$. Therefore, we can recover $t^2$ symbols in node $(j-1)t+i$ by downloading $td$ symbols from $d$ helper nodes and the repair access is optimal.
\end{proof}

According to Theorem \ref{thm:opt-repair} and Theorem \ref{thm:repair-preserve},
the codes $\mathcal{C}_{2}(n,k,\eta,t)$ have optimal repair access for each of the first $2\eta t$ nodes.
By recursively applying the transformation for set $i+1$ of the codes $\mathcal{C}_{i}(n,k,\eta ,t)$
for $i=1,2,\ldots,\lceil \frac{n}{\eta t}\rceil-1$, we can obtain the codes
$\mathcal{C}_{\lceil \frac{n}{\eta t}\rceil}(n,k,\eta,t)$ that satisfy the MDS property according to
Theorem \ref{thm:hybridcons} and have optimal repair access for any single node according to
Theorem \ref{thm:opt-repair} and Theorem \ref{thm:repair-preserve}.

\subsection{Repair Method}
For $i=1,2,\ldots,t$, $j=1,2,\ldots,\eta$ and $\ell=1,2,\ldots,\lceil \frac{n}{\eta t}\rceil$,
we can repair $t^{\lceil \frac{n}{\eta t}\rceil}$ symbols in node $(\ell-1)\eta t+(j-1)t+i$ by downloading
the symbols in row $f$ for
$$f\bmod (t^\ell)\in \{(i-1)(t^{\ell-1})+1,(i-1)(t^{\ell-1})+2,\ldots,i(t^{\ell-1})-1\}$$
from $\eta-1$ nodes $(\ell-1)\eta t+i,\ldots,(\ell-1)\eta t+(j-2)t+i,(\ell-1)\eta t+jt+i,\ldots,(\ell-1)\eta t+(\eta-1)t+i$, $t-1$ nodes $(j-1)t+1,\ldots,(j-1)t+i-1,(j-1)t+i+1,\ldots,(j-1)t+t$,
and $d-\eta-t+2$ nodes $h_1,h_2,\ldots,h_{d-\eta-t+2}$. For $i=1,2,\ldots,d-\eta-t+2$, if $h_i$ belongs to a group in set $\mu$ with $\mu>\ell$, then all $t$ nodes of the group should be chosen in the helper nodes $h_1,h_2,\ldots,h_{d-\eta-t+2}$. With the same argument of the proof of Theorem \ref{thm:repair-preserve}, we can show that node $(\ell-1)\eta t+(j-1)t+i$ can be repaired by the above repair method with repair access being $d\cdot t^{\lceil \frac{n}{\eta t}\rceil-1}$, which is,  according to \eqref{eq:optimal-repair}, optimal.

The example given in Section \ref{sec:sec-exam} is the code $\mathcal{C}_{2}(n,k,\eta ,t)$ with $n=8$, $k=5$, $d=6$ and $\eta=2$, which is shown in Table \ref{table:A3}.
Suppose that node 1 fails, i.e., $i=1$, $j=1$ and $\ell=1$. According to the above repair method, we can repair node 1 by downloading the symbols in row $f$ for $f\bmod 2\in \{1\}$ from nodes $2,3,5,6,7,8$, i.e.,
\begin{align*}
&c^1_2+c^2_1,c^1_3,c^1_5,c^1_6+c^3_5,c^1_7,c^1_8+c^3_7,\\
&c^3_2+c^4_1,c^3_3,c^3_5+e_3c^1_6,c^3_6,c^3_7+e_4c^1_8,c^3_8.
\end{align*}
The detailed repair procedure of node 1 is illustrated in Fig. \ref{example}.
When $i=1$, $j=2$ and $\ell=1$, we can repair node 3 by downloading the symbols in rows 1 and 3 from nodes $1,4,5,6,7,8$. When $i=2$, $j=1$ and $\ell=1$, we can repair node 2 by downloading the symbols in rows 2 and 4 from nodes $1,4,5,6,7,8$. When $i=2$, $j=2$ and $\ell=1$, we can repair node 2 by downloading the symbols in rows 2 and 4 from nodes $2,3,5,6,7,8$.
According to the above repair method, we can repair node 5 and node 7 by downloading the symbols in rows $1,2$ from nodes $1,2,3,4,6,7$ and $1,2,3,4,5,8$, respectively, and repair node 6 and node 8 by downloading the symbols in rows $3,4$ from nodes $1,2,3,4,5,8$ and $1,2,3,4,6,7$, respectively.

\section{Multi-Layer Transformed Binary MDS Array Codes}
\label{sec:trans-binary}
In this section, we present the method of applying the proposed transformation for binary
MDS array codes, using EVENODD codes as a motivating example, to obtain the multi-layer
transformed EVENODD codes that have optimal repair access.

\subsection{EVENODD Codes}
An EVENODD code can be presented by a $(p-1)\times (k+r)$ array $[a_{i,j}]$
for $i=0,1,\ldots,p-2$ and $j=0,1,\ldots,k+r-1$. The first $k$ columns are information columns
that store information bits and the last $r$ columns are parity columns that store parity bits.
For $j=0,1,\ldots,k+r-1$, we represent the $p-1$ bits $a_{0,j},a_{1,j},\ldots,a_{p-2,j}$ in column $j$
by the polynomial
\[
a_j(x)=a_{0,j}+a_{1,j}x+\ldots+a_{p-2,j}x^{p-2}.
\]
The first $k$ polynomials $a_0(x),\ldots,a_{k-1}(x)$ are information polynomials, and the last
$r$ polynomials $a_{k}(x),\ldots,a_{k+r-1}(x)$ are parity polynomials that are computed as
\[\begin{bmatrix}
a_{k}(x)& \cdots & a_{k+r-1}(x)\end{bmatrix}
=\begin{bmatrix}a_{0}(x)& \cdots & a_{k-1}(x)\end{bmatrix}\cdot
\begin{bmatrix}
1& 1& \cdots &1\\
1& x& \cdots &x^{r-1}\\
\vdots & \vdots & \ddots & \vdots \\
1& x^{k-1} & \cdots & x^{(r-1)(k-1)}
\end{bmatrix}
\]
over the ring $\mathbb{F}_2[x]/(1+x+\cdots+x^{p-1})$. Given parameters $k$ and $r$, we need to choose
the parameter $p$ such that the MDS property is satisfied.
The MDS property condition of EVENODD codes is given in \cite{blaum1996,hou2016on}.

\subsection{Multi-Layer Transformed EVENODD Codes}
\label{sec:evenodd}
We first present how to apply the proposed transformation in Section \ref{sec:trans} for EVENODD
codes such that the transformed EVENODD codes have optimal repair access for each of the
chosen $(d-k+1)\eta$ columns, and then give the multi-layer transformed EVENODD codes with optimal repair access for any single column as in Section \ref{sec:const}.

The transformed EVENODD code is an array code of size $(p-1)(d-k+1)\times (k+r)$. Given the
$(p-1)(d-k+1)\times k$ information array
\[
\begin{bmatrix}
a^1_{0,1}& a^1_{0,2}& \cdots &a^1_{0,k}\\
a^1_{1,1}& a^1_{1,2}& \cdots &a^1_{1,k}\\
\vdots & \vdots & \ddots & \vdots \\
a^1_{p-2,1}& a^1_{p-2,2}& \cdots &a^1_{p-2,k}\\ \hline
\vdots & \vdots & \ddots & \vdots \\ \hline
a^{d-k+1}_{0,1}& a^{d-k+1}_{0,2}& \cdots &a^{d-k+1}_{0,k}\\
a^{d-k+1}_{1,1}& a^{d-k+1}_{1,2}& \cdots &a^{d-k+1}_{1,k}\\
\vdots & \vdots & \ddots & \vdots \\
a^{d-k+1}_{p-2,1}& a^{d-k+1}_{p-2,2}& \cdots &a^{d-k+1}_{p-2,k}\\
\end{bmatrix},
\]
we first represent each $p-1$ information bits $a^{\ell}_{0,j},a^{\ell}_{1,j},\ldots,a^{\ell}_{p-2,j}$ by the
{\em information polynomial}
\[
a^\ell_j(x)=a^\ell_{0,j}+a^\ell_{1,j}x+\ldots+a^\ell_{p-2,j}x^{p-2},
\]
and then compute $(d-k+1)n$ {\em coded polynomials} by
\[\begin{bmatrix}
a^\ell_{k+1}(x)& \cdots & a^\ell_{k+r}(x)\end{bmatrix}
=\begin{bmatrix}a^\ell_{1}(x)& \cdots & a^\ell_{k}(x)\end{bmatrix}\cdot
\begin{bmatrix}
1& 1& \cdots &1\\
1& x& \cdots &x^{r-1}\\
\vdots & \vdots & \ddots & \vdots \\
1& x^{k-1} & \cdots & x^{(r-1)(k-1)}
\end{bmatrix}
\]
over the ring $\mathbb{F}_2[x]/(1+x+\cdots+x^{p-1})$, where $\ell=1,2,\ldots,d-k+1$ and $j=1,2,\ldots,k$.
Recall that $t=d-k+1$ and $\eta=\lfloor \frac{r-1}{d-k}\rfloor$.
For $i=1,2,\ldots, t$ and $j=1,2,\ldots,\eta$, column $(j-1)t+i$ stores the following $t$
polynomials
\begin{equation}
\begin{array}{ll}
& a^1_{(j-1)t+i}(x)+a^i_{(j-1)t+1}(x),\\
& a^2_{(j-1)t+i}(x)+a^i_{(j-1)t+2}(x),\ldots,\\
& a^{i-1}_{(j-1)t+i}(x)+a^i_{(j-1)t+i-1}(x),\\
& a^i_{(j-1)t+i}(x),\\
& a^{i+1}_{(j-1)t+i}(x)+e_j(x)a^i_{(j-1)t+i+1}(x),\\
& a^{i+2}_{(j-1)t+i}(x)+e_j(x)a^i_{(j-1)t+i+2}(x),\ldots,\\
& a^{t}_{(j-1)t+i}(x)+e_j(x)a^i_{(j-1)t+t}(x),
\end{array}
\label{eq:trans-symbol1}
\end{equation}
where $e_j(x)$ is a non-zero polynomial in $\mathbb{F}_2[x]/(1+x+\cdots+x^{p-1})$ such
that both $e_j(x)$ and $e_j(x)+1$ are invertible in $\mathbb{F}_2[x]/(1+x+\cdots+x^{p-1})$.
For $h=t\cdot \eta+1, t\cdot \eta+2,\ldots,n$, column $h$ stores $t$ polynomials
\[
a^1_{h}(x),a^2_{h}(x),\ldots,a^{t}_{h}(x).
\]
The above obtained codes are called {\em transformed EVENODD codes}.
The transformation
in \cite{hou2018} can be viewed as a special case of our transformation with
$\eta=1$. Note that the
transformed codes in Section \ref{sec:trans} and the transformed EVENODD codes
are essentially the same codes, the difference is that the transformed codes in
Section \ref{sec:trans} are operated over the finite field $\mathbb{F}_q$ and the
transformed EVENODD codes are operated over the ring $\mathbb{F}_2[x]/(1+x+\cdots+x^{p-1})$.

With the same proof of Theorem \ref{thm:hybridcons}, we can show that the transformed
EVENODD codes satisfy the MDS property if EVENODD codes satisfy the MDS property
and $p$ is large enough. We can also show that the repair access of each of the first
$\eta t$ columns is optimal, as like the transformed codes in Section \ref{sec:trans}.
By recursively applying the transformation for EVENODD codes $\lceil \frac{n}{\eta t}\rceil$ times,
we can obtain the multi-layer transformed $\mathsf{EVENODD}_{\lceil \frac{n}{\eta t}\rceil}(n,k,\eta,t)$ codes
that have optimal repair access for any single column and satisfy the MDS property when $p$
is large enough.

\subsection{Transformation for Other Binary MDS Array Codes}
The transformation can also be employed in other binary MDS array codes, such as
the codes in \cite{blomer1999,feng2005,hou2018a,hou2017,hou2018c,hou2018b}, to enable optimal repair access for any single column.

Specifically, the transformation for RDP and codes in \cite{blomer1999,feng2005,hou2018a} is similar to the transformation for EVENODD
codes in Section \ref{sec:evenodd}. By applying the transformation for RDP $\lceil \frac{n}{\eta t}\rceil$ times,
we can show that the obtained multi-layer transformed $\mathsf{RDP}_{\lceil \frac{n}{\eta t}\rceil}(n,k,\eta,t)$ codes
have optimal repair access for any single column, as in $\mathsf{EVENODD}_{\lceil \frac{n}{\eta t}\rceil}(n,k,\eta,t)$
codes.

Recall that the codes in \cite{hou2017,hou2018c,hou2018b,gad2013,pamies2016} have optimal repair access or efficient optimal repair
access for information column, we only need to apply the transformation for them
$\lceil \frac{r}{\eta t}\rceil$ times to obtain the transformed codes with lower
sub-packetization level that have optimal repair access
for any parity column and optimal repair access or efficient repair access for any information column.
Note that to preserve the efficient repair property of the information column, we need to
carefully design the transformation for different codes in \cite{hou2017,hou2018c,hou2018b,gad2013,pamies2016}, and that will be one of our future work.

\section{Comparison}
\label{sec:com}
\begin{table*}[!t]
\caption{Comparison of existing MDS codes with optimal repair access.}
\begin{center}
\begin{tabular}{|c|c|c|c|}
\hline
MDS codes &  No. of helpers $d$  & Sub-packetization $\alpha$ & No. of nodes with optimal repair  \\
\hline
The first codes in \cite{ye2017}& $n-1$ & $r^{n}$ & $n$ \\
\hline
The second codes in \cite{ye2017}& $n-1$ & $r^{n-1}$ & $n$ \\
\hline
Codes in \cite{ye2017explicit1}& any $d$ & $(d-k+1)^{\lceil \frac{n}{d-k+1}\rceil}$ & $n$ \\
\hline
Codes in \cite{myna2018}& any $d$ & $(d-k+1)^{\lceil \frac{n}{d-k+1}\rceil}$ & $n$ \\
\hline
Codes in \cite{li2017}& $n-1$ & $r^{\lceil \frac{n}{r}\rceil}$ & $n$ \\
\hline
\cite[Corollary 4]{balaji2017}& $n-1$ & $\geq r^{\lceil \frac{n}{r}\rceil}$ & $n$ \\
\hline
\cite[Corollary 6]{balaji2017}& any $d$ & $\geq (d-k+1)^{\lceil \frac{n-1}{d-k+1}\rceil}$ & $n$ \\
\hline
\cite[Corollary 5]{balaji2017}& any $d$ & $\geq (d-k+1)^{\lceil \frac{w}{d-k+1}\rceil}$ & $k-1<w \leq n$ \\
\hline
Proposed $\mathcal{C}_{\lceil \frac{n}{\eta t}\rceil}(n,k,\eta,t)$& any $d$ & $(d-k+1)^{\lceil \frac{n}{\eta(d-k+1)}\rceil}$ & $n$ \\
\hline
Proposed  $\mathcal{C}_{\lceil \frac{n}{2\eta }\rceil}(n,k,\eta,t=2)$& $d=k+1$ & $2^{\lceil \frac{n}{2\eta}\rceil}$ & $n$ \\
\hline
Proposed  $\mathcal{C}_{\lceil \frac{w}{\eta t}\rceil}(n,k,\eta,t)$& any $d$ & $(d-k+1)^{\lceil \frac{w}{\eta(d-k+1)}\rceil}$ & $w$ \\
\hline
Proposed  $\mathcal{C}_{\lceil \frac{w}{2\eta}\rceil}(n,k,\eta,t=2)$& $d=k+1$ & $2^{\lceil \frac{w}{2\eta}\rceil}$ & $k-1 <w\leq n$ \\
\hline
\end{tabular}
\end{center}
\label{table:comparison}
\end{table*}

A comparison of existing MDS codes with optimal repair access and the proposed codes $\mathcal{C}_{\lceil \frac{n}{\eta t}\rceil}(n,k,\eta,t)$ is given in Table \ref{table:comparison}. The results in Table \ref{table:comparison} show that the proposed codes $\mathcal{C}_{\lceil \frac{n}{\eta t}\rceil}(n,k,\eta,t)$ have two advantages: $(i)$ when $\lfloor \frac{r-1}{d-k}\rfloor >1$, the proposed codes $\mathcal{C}_{\lceil \frac{n}{\eta t}\rceil}(n,k,\eta,t)$ with optimal repair access for any single node have lower sub-packetization level compared to both the existing MDS codes in \cite{ye2017explicit1,myna2018} with optimal repair access for any single node and the lower bound of sub-packetization level of MDS codes with optimal repair access for any single node in \cite[Corollary 6]{balaji2017}; $(ii)$ when $\lfloor \frac{r-1}{d-k}\rfloor >1$, the proposed codes $\mathcal{C}_{\lceil \frac{w}{\eta t}\rceil}(n,k,\eta,t)$ with optimal repair access for each of $w$ nodes have less sub-packetization level than the tight lower bound on the sub-packetization level of MDS codes with optimal repair access for each of $w$ nodes in \cite[Corollary 5]{balaji2017}. Table \ref{table:comparison1} shows the sub-packetization level of the proposed codes and the codes in
\cite{ye2017explicit1,myna2018} for some parameters.

\begin{table*}[!t]
\caption{Sub-packetization level of the proposed codes and the codes in
\cite{ye2017explicit1,myna2018} for some parameters.}
\begin{center}
\begin{tabular}{|c|c|c|c|c|c|c|}
\hline
Parameters $(n,k,d)$ & $(14,10,11)$  & $(12,8,9)$ & $(18,14,15)$ & $(18,13,15)$ & $(24,19,21)$ & $(80,71,72)$  \\
\hline
$\alpha$ in \cite{ye2017explicit1,myna2018}& $128$ & $64$ & $512$& $729$ & $6561$& $2^{40}=1 099 511 627 776$\\
\hline
$\alpha$ of the proposed codes& $8$ & $4$ & $8$& $27$ & $81$& $1024$\\
\hline
\end{tabular}
\end{center}
\label{table:comparison1}
\end{table*}

Compared with the existing binary MDS array codes with optimal repair access for any column, the
proposed $\mathsf{EVENODD}_{\lceil \frac{n}{\eta t}\rceil}(n,k,\eta,t)$ codes have lower sub-packetization level.
Recall that the sub-packetization level of both the transformed EVENODD codes in \cite{hou2018} and
the transformed codes in \cite{li2019} is
$(d-k+1)^{\lceil \frac{n}{ d-k+1}\rceil}$, which is strictly larger than the sub-packetization level of $\mathsf{EVENODD}_{\lceil \frac{n}{\eta t}\rceil}(n,k,\eta,t)$ when $\lceil \frac{r-1}{d-k}\rceil \geq 2$.

\section{Conclusions and Future Work}
\label{sec:conclu}

In this paper, we first propose a  transformation for MDS codes that can enable optimal repair
access for each of the chosen $(d-k+1)\eta$ nodes. By applying the proposed transformation for any
MDS code $\lceil \frac{n}{\eta t}\rceil$ times, we can obtain a multi-layer transformed code that
has optimal repair access for any single node. With a slightly modification, we can also design the transformation
for binary MDS array codes where we  use EVENODD codes as a motivating example. We show that
the $\mathsf{EVENODD}_{\lceil \frac{n}{\eta t}\rceil}(n,k,\eta,t)$ codes
obtained by applying the transformation for EVENODD codes $\lceil \frac{n}{\eta t}\rceil$ times have
optimal repair access for any single column. Moreover, the proposed multi-layer transformed codes have lower
sub-packetization level than that of the existing MDS codes with optimal repair access for any single node.
How to design a specific transformation for binary MDS codes in \cite{hou2017,hou2018c,hou2018b,gad2013,pamies2016} with
efficient repair access is one of our
future work. When more than one column fail, how to retrieve all  information bits from any
$k$ of the surviving nodes with lower computational complexity is a challenge decoding problem. How to
design MDS codes with lower sub-packetization level, efficient repair access for any single node and
lower decoding complexity is another future work.

\appendices

\ifCLASSOPTIONcaptionsoff
  \newpage
\fi

\bibliographystyle{IEEEtran}

\end{document}